\def\@linkcolor{blue}
  \def\@anchorcolor{red}
  \def\@citecolor{red}
  \def\@filecolor{red}
  \def\@urlcolor{black}
  \def\@menucolor{red}
  \def\@pagecolor{red}
  \edef\x{%
    \edef\noexpand\x{%
      \endgroup
      \noexpand\toks@{%
        \catcode 96=\noexpand\the\catcode`\noexpand\`\relax
        \catcode 61=\noexpand\the\catcode`\noexpand\=\relax
      }%
    }%
    \noexpand\x
  }%
\DeclareMathOperator{\sign}{sign}
\newtheorem{Theorem}{Theorem}
\newtheorem{Lemma}{Lemma}
\newtheorem{Remark}{Remark}
\newtheorem{Cor}{Corollary}
\newtheorem{Assumption}{Assumption}
\newtheorem{Definition}{Definition}
\def\BibTeX{{\rm B\kern-.05em{\sc i\kern-.025em b}\kern-.08em
    T\kern-.1667em\lower.7ex\hbox{E}\kern-.125emX}}
\begin{document}
\title{Finite-Time Stability of Switched and Hybrid Systems with Unstable Modes}

\author{Kunal Garg, \IEEEmembership{Student Member, IEEE} and Dimitra Panagou, \IEEEmembership{Senior Member, IEEE}
\thanks{
The authors would like to acknowledge the support of the Air Force Office of Scientific Research under award number FA9550-17-1-0284.}
\thanks{The authors are with the Department of Aerospace Engineering, University of Michigan, Ann Arbor, MI, USA; \texttt{\{kgarg, dpanagou\}@umich.edu}.}
}

\maketitle

\begin{abstract}
In this work, we study finite-time stability of switched and hybrid systems in the presence of unstable modes. We present sufficient conditions in terms of multiple Lyapunov functions for the origin of the system to be finite time stable. More specifically, we show that even if the value of the Lyapunov function increases in between two switches, i.e., if there are unstable modes in the system, finite-time stability can still be guaranteed if the finite time convergent mode is active long enough. In contrast to earlier work where the Lyapunov functions are required to be decreasing during the continuous flows and non-increasing at the discrete jumps, we allow the Lyapunov functions to increase \emph{both} during the continuous flows and the discrete jumps. As thus, the derived stability results are less conservative compared to the earlier results in the related literature, and in effect allow the hybrid system to have unstable modes. 
Then, we illustrate how the proposed finite-time stability conditions specialize for a class of switched systems, and present a method on the synthesis of a finite-time stabilizing switching signal for switched linear systems. As a case study, we design a finite-time stable output feedback controller for a linear switched system, in which only one of the modes is controllable and observable. Numerical example demonstrates the efficacy of the proposed methods. 
\end{abstract}

\begin{IEEEkeywords}
Finite-Time Stability; Hybrid Systems;  Multiple Lyapunov Functions.  
\end{IEEEkeywords}

\section{Introduction}
Many real-world systems exhibit properties of continuous evolution and discrete jumps at times, which are termed as \textit{hybrid systems}. 
Hybrid systems are capable of modeling large class of complex dynamical systems. The introductory paper \cite{antsaklis1998hybrid} provides an overview and the merits of using hierarchical organization within a hybrid systems framework; namely, that it helps in managing complexity since it requires less detailed models at higher levels. The class of \textit{switched systems} that includes the variable structure system and the multi-modal system is an important subcategory of hybrid systems. There is a variety of practical examples where certain stability properties cannot be achieved using a single continuous feedback, and as thus a switching controller becomes essential; for instance, the authors in \cite{zhao2001hybrid} make their case for the well-studied pendulum on a cart problem. Many control theoretic examples have been proposed where switched controller system can provide stability and performance guarantees; see e.g. \cite{ishii2002stabilizing,narendra2003adaptive,savkin1999robust}. Stability of hybrid systems has been studied extensively in the literature; for an overview of the theory of switched and hybrid systems, i.e., on solution concepts, notion of stability, the interested readers are referred to \cite{liberzon2003switching} and \cite{lygeros2004lecture,goebel2012hybrid}, respectively.

\subsection{Stability of switched systems}
Stability of switched system has been analyzed by many researchers in the past. The survey articles \cite{lin2009stability,davrazos2001review} and the references therein give a detailed overview of various stability results for switched and hybrid systems. Stability of switched systems is typically studied using either a \textit{common} Lyapunov function, or \textit{multiple} Lyapunov functions. The book \cite{liberzon2003switching} discusses necessity and sufficiency of the existence of a common Lyapunov function for all subsystems of a switched system for asymptotic stability under arbitrary switching. The authors in \cite{ishii2002stabilizing} study linear switched systems with dwell-time using a common quadratic control Lyapunov function (CQLF) and state-space partitioning. In the review article \cite{shorten2007stability}, the authors study the stability of switched linear systems and linear differential inclusions. They present sufficient conditions for the existence of CQLFs and discuss converse Lyapunov results for switched systems. In \cite{branicky1998multiple}, the author introduces the concept of multiple Lyapunov functions to analyze stability of switched systems; since then, a lot of work has been done on the stability of switched systems using multiple Lyapunov functions \cite{zhao2008stability,zhao2012stability,lin2009stability}. In \cite{zhao2008stability}, the authors relax the non-increasing condition on the Lyapunov functions by introducing the notion of generalized Lyapunov functions. They present necessary and sufficient conditions for stability of switched systems under arbitrary switching. In \cite{zhao2012stability}, the authors introduce the concept of Multiple Linear Copositive Lyapunov functions (ML-CLFs) and give sufficient conditions for exponential stability of Switched Positive Linear Systems (SPLS) in terms of feasibility of Linear Matrix Inequalities (LMI). 
The authors in \cite{zhao2017new} use discontinuous multiple Lyapunov functions in order to guarantee stability of \textit{slowly} switched systems, where the stable subsystems are required to switch slower (i.e., stay active for a longer duration) as compared to unstable subsystems.

\subsection{Stability of hybrid systems}
Unlike switched systems, where only the dynamics of the system is allowed to have jumps, the notion of hybrid systems is more general and allows the system states to have discrete jumps as well. The survey paper \cite{teel2014stability} studies Lyapunov stability (LS), Lagrange stability and asymptotic stability (AS) for stochastic hybrid systems (SHS), and provides Lyapunov conditions for stability in probability. The paper also presents open problems on converse results on the stability in probability of SHS. 
In \cite{liu2016lyapunov}, the authors study hybrid systems exhibiting delay phenomena (i.e., memory). They establish sufficient conditions for AS using Lyapunov-Razumikhin functions and Lyapunov-Krasovskii functionals. More recently, pointwise AS of hybrid systems is studied in \cite{goebel2016notions}, where the notion of set-valued Lyapunov functions is used to establish sufficient conditions for AS of a closed set. In \cite{teel2013lyapunov}, the authors impose an average dwell-time for the discrete jumps and devise Lyapunov-based sufficient conditions for exponential stability of closed sets.

\subsection{Related work on FTS}
In contrast to AS, which pertains to convergence as time goes to infinity, finite-time Stability (FTS)\footnote{With slight abuse of notation, we use FTS to denote the phrase "finite-time stability" or "finite-time stable", depending on the context.} is a concept that requires convergence of solutions in finite time. FTS is a well-studied concept, motivated in part from a practical viewpoint due to properties such as convergence in finite time, as well as robustness with respect to disturbances \cite{ryan1991finite}. In the seminal work \cite{bhat2000finite}, the authors introduce necessary and sufficient conditions in terms of Lyapunov function for continuous, autonomous systems to exhibit FTS, with focus on continuous-time autonomous systems. 

FTS of switched/hybrid systems has gained popularity in the last few years. The authors in \cite{liu2017finite} consider the problem of designing a controller for a linear switched system under delay and external disturbance with finite- and fixed-time convergence. In \cite{li2013robust}, the authors design a hybrid observer and show finite-time convergence in the presence of unknown, constant bias.  In \cite{nersesov2007finite}, the authors study FTS of nonlinear impulsive dynamical systems, and present sufficient conditions
to guarantee FTS. The work in \cite{li2013robust,nersesov2007finite} considers discrete jumps in the system states in a continuously evolving system, i.e., one model for the continuous dynamics, and one model for the discrete dynamics. The authors in \cite{li2019finite} present conditions in terms of a common Lyapunov function for FTS of hybrid systems. They require the value of the Lyapunov function to be decreasing during the continuous flow and non-increasing at the discrete jumps. 

The authors in \cite{rios2015state} design an FTS state-observer for switched systems via a sliding-mode technique. In \cite{orlov2004finite}, the authors introduce the concept of a \textit{locally} homogeneous system, and show FTS of switched systems with uniformly bounded uncertainties. More recently, \cite{zhang2018finite} studies FTS of homogeneous switched systems by introducing the concept of hybrid homogeneous degree, and relating negative homogeneity with FTS. In \cite{fu2015global}, the authors consider systems in strict-feedback form with positive powers and design a controller as well as a switching law so that the closed-loop system is FTS. The authors in \cite{li2016results} present conditions in terms of a common Lyapunov function for FTS of hybrid systems. They require the value of the Lyapunov function to be decreasing during the continuous flow, and non-increasing at the discrete jumps. In \cite{bejarano2011finite}, the authors design an FTS observer for switched systems with unknown inputs. They assume that each linear subsystem is \textit{strongly} observable, and that the first switching occurs after an \emph{a priori} known time. In contrast, in the current paper we do not assume that the subsystems are homogeneous or in strict feedback form, and present conditions in terms of multiple Lyapunov functions for FTS of the origin.

The work in  \cite{amato2008suff} studies FTS of impulsive dynamical linear systems (IDLSs); impulsive systems describe the evolution of systems where the continuous development of a process is interrupted by abrupt changes of
state \cite{bonotto2016survey}. In \cite{amato2013necessary}, the authors extend these results and show that the conditions in \cite{amato2008suff} are also necessary for FTS of IDLs. The authors in \cite{li2016results} present conditions in terms of a common Lyapunov function for establishing FTS of hybrid systems. In \cite{zhangfinite}, the authors consider the switched system with an assumption that each subsystem possess a homogeneous Lyapunov function and that the switching-intervals are constant. 

\subsection{Our Contributions}
In this paper, we consider a general class of hybrid systems, and develop sufficient conditions for FTS of the origin of the hybrid system in terms of multiple Lyapunov functions. \textit{To the best of authors' knowledge, this is the first work considering FTS of switched or hybrid systems using multiple Lyapunov functions}. The main contributions are summarized as follows. 

\textbf{FTS of hybrid systems with unstable modes}: We first define the notion of FTS for hybrid systems so that it does not restrict each mode of the hybrid system to be FTS in itself. More specifically, we relax the requirement in \cite{zhao2008stability,li2019finite,li2016results} that the Lyapunov function is non-increasing at the discrete jumps, and strictly decreasing during the continuous flow; instead, we allow the multiple Lyapunov functions to increase both during the continuous flow and at the discrete jumps, and only require that these increments are bounded. In this respect, we allow the hybrid system to have unstable modes while still guaranteeing FTS. In addition, we present a novel proof on the  stability of the origin using multiple Lyapunov functions under the aforementioned relaxed conditions.
In contrast to \cite{li2013robust,nersesov2007finite}, we consider the general case with $N_f$ continuous flows and $N_g$ discrete jump dynamics, where $N_f,N_g$ can be any positive integers. The main result is that if the origin is \textit{uniformly} stable, i.e., stable under arbitrary switching, and if there exists an FTS mode that is active for a sufficient cumulative time, then the origin of the resulting hybrid system is FTS. 

\textbf{FTS of switched systems}: Then, we demonstrate how the results specialize for a class of switched systems with unstable subsystems. In contrast to \cite{rios2015state,fu2015global}, we do not assume that the subsystems of the switched system are homogeneous or in strict feedback form, and present conditions for FTS of a class of switched systems. 

\textbf{Switching-signal design for FTS, and applications}: We present a method for designing a switching signal so that the origin of the resulting switched system is FTS. We then apply our developed methods to design an FTS output controller for switched linear systems for the case when only one of the subsystems (or modes) is controllable and observable.

\subsection{Organization}
The paper is organized as follows: In Section \ref{FTS HS}, we present an overview of FTS followed by conditions for FTS of hybrid systems. In Theorem \ref{FT Th 2}, we present conditions for LS and FTS of the origin in terms of multiple Lyapunov functions, and then, show that uniform stability and sufficiently long activation of FTS mode is sufficient for FTS in Corollary \ref{cor LS FTS mode}. In Section \ref{sec Switch System}, we specialize our results for a class of switched systems. We also present a method of designing finite-time stabilizing switching law and as a case study, design FTS output-feedback for switched linear system for the case when only one of the subsystems is both controllable and observable. Section \ref{Simulations} evaluates the performance of the proposed method via simulation results. Our conclusions and thoughts on future work are summarized in Section \ref{Conclusions}.

\section{FTS of Hybrid Systems}\label{FTS HS}

\subsection{Preliminaries}

We denote by $\|\cdot\|$ the Euclidean norm of vector $(\cdot)$, $|\cdot|$ the absolute value if $(\cdot)$ is scalar and the length if $(\cdot)$ is a time interval. The set of non-negative reals is denoted by $\mathbb R_+$, set of non-negative integers by $\mathbb Z_+$ and set of positive integers by $\mathbb N$. We denote by $\textrm{int}(S)$ the interior of the set $S$, and by $t^-$ and $t^+$ the time just before and after the time instant $t$, respectively. 

\begin{Definition}\label{class GK}
(\textbf{Class-$\mathcal{GK}$ function}): A function $\alpha:D\rightarrow\mathbb R_+$, $D\subset \mathbb R_+$, is called a class-$\mathcal{GK}$ function if it is increasing, i.e., for all $x>y\geq 0$, $\alpha(x)>\alpha(y)$, and right continuous at the origin with $\alpha(0) = 0$.
\end{Definition}

\begin{Definition}\label{class GK infty}
(\textbf{Class-$\mathcal{GK}_\infty$ function}): A function $\alpha:\mathbb R_+\rightarrow\mathbb R_+$ is called a class-$\mathcal{GK}_\infty$ function if it is a class-$\mathcal{GK}$ function, and $\lim_{r\to\infty}\alpha(r) = \infty$. 
\end{Definition}
Note that the class-$\mathcal{GK}$ (respectively, $\mathcal{GK}_\infty$) functions have similar composition properties as those of class-$\mathcal K$ (respectively, class-$\mathcal{K}_\infty$) functions, e.g., for $\alpha_1, \alpha_2\in \mathcal{GK}$ and $\alpha\in \mathcal{K}$, 
we have:
\begin{itemize}
    \item $\alpha_1\circ \alpha_2\in \mathcal{GK}$ and $\alpha_1+\alpha_2 \in \mathcal{GK}$;
    \item $\alpha\circ\alpha_1\in \mathcal{GK}$, $\alpha_1\circ\alpha\in \mathcal{GK}$ and $\alpha_1 + \alpha \in \mathcal{GK}$.
\end{itemize}


\noindent Consider the system: 
\begin{equation}\begin{split}\label{ex sys}
\dot y(t) = f(y(t)),
\end{split}\end{equation}
where $y\in \mathbb R^n$, $f: D \rightarrow \mathbb R^n$ is continuous on an open neighborhood $D\subseteq \mathbb R^n$ of the origin and $f(0)=0$. The origin is said to be an FTS equilibrium of \eqref{ex sys} if it is Lyapunov stable and \textit{finite-time convergent}, i.e., for all $y(0) \in \mathcal N \setminus\{0\}$, where $\mathcal N$ is some open neighborhood of the origin, $\lim_{t\to T} y(t)=0$, where $T = T(y(0))<\infty$  \cite{bhat2000finite}.
The authors also presented Lyapunov conditions for FTS of the origin of \eqref{ex sys}:

\begin{Theorem}[\cite{bhat2000finite}]\label{FTS Bhat}
Suppose there exists a continuous function $V$: $D \rightarrow \mathbb{R}$ such that the following holds: \\
(i) $V$ is positive definite \\
(ii) There exist real numbers $c>0$ and $\alpha \in (0, 1)$ , and an open neighborhood $\mathcal{V}\subseteq D$ of the origin such that 
\begin{equation}\begin{split} \label{FTS Lyap}
    \dot V(y) \leq - cV(y)^\alpha, \; y\in \mathcal{V}\setminus\{0\}.
\end{split}\end{equation}
Then the origin is an FTS equilibrium for \eqref{ex sys}.
\end{Theorem}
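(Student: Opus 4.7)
The plan is a two-stage argument: first establish Lyapunov stability of the origin, then upgrade this to finite-time convergence via a scalar comparison argument on $V$. Since the hypothesis \eqref{FTS Lyap} implies $\dot V(y) \le 0$ on $\mathcal V\setminus\{0\}$, and $V$ is positive definite and continuous, standard arguments yield Lyapunov stability: given $\varepsilon>0$ with $\overline{B_\varepsilon(0)}\subset\mathcal V$, choose $\delta>0$ so that $\sup_{\|y\|\le\delta}V(y) < \inf_{\|y\|=\varepsilon}V(y)$; then any trajectory starting in $B_\delta(0)$ cannot reach $\|y\|=\varepsilon$ since $V$ is non-increasing along solutions. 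In particular, trajectories starting sufficiently close to the origin remain in $\mathcal V$ for all time they exist, and by continuation can be extended indefinitely (or until they hit the origin).

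For finite-time convergence, I would fix $y(0)\in\mathcal N\setminus\{0\}$ for a small enough neighborhood $\mathcal N\subseteq\mathcal V$ so that the above stability argument applies, and consider $v(t):=V(y(t))$. By (ii), $v$ satisfies the scalar differential inequality
\begin{equation*}
\dot v(t) \le -c\,v(t)^{\alpha}, \qquad v(0)=V(y(0))>0,
\end{equation*}
as long as $v(t)>0$. Compare with the scalar initial-value problem $\dot w=-cw^{\alpha}$, $w(0)=v(0)$, which is separable and integrates explicitly to $w(t)^{1-\alpha}=v(0)^{1-\alpha}-c(1-\alpha)t$, reaching zero at the finite time
\begin{equation*}
T(y(0)) \;=\; \frac{V(y(0))^{1-\alpha}}{c(1-\alpha)}.
\end{equation*}
By the comparison lemma, $0\le v(t)\le w(t)$ on $[0,T(y(0)))$, so $V(y(t))\to 0$ as $t\to T(y(0))$. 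Positive definiteness of $V$ together with continuity of $y(\cdot)$ then forces $y(T(y(0)))=0$. Finally, uniqueness of the zero solution of the comparison equation at zero (in the ``backward'' sense, since $\alpha\in(0,1)$ makes the right-hand side non-Lipschitz at $0$ only from above) combined with Lyapunov stability shows $y(t)\equiv 0$ for $t\ge T(y(0))$.

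The only delicate step is the comparison argument itself, because the differential inequality \eqref{FTS Lyap} is only assumed on $\mathcal V\setminus\{0\}$ and the map $w\mapsto -cw^\alpha$ is not Lipschitz at $w=0$, so the standard comparison lemma does not apply out of the box. I would handle this by working on the open set where $v(t)>0$: on any interval $[0,t_1]$ with $v>0$ throughout, the inequality is valid and a direct integration of $\dot v\,v^{-\alpha}\le -c$ gives $v(t)^{1-\alpha}\le v(0)^{1-\alpha}-c(1-\alpha)t$. Defining $T^*:=\inf\{t\ge 0:v(t)=0\}$ (with $T^*=\infty$ if the set is empty), this bound forces $T^*\le T(y(0))<\infty$. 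Once $v$ reaches zero, Lyapunov stability and positive definiteness pin $y$ to the origin thereafter, completing the proof that the origin is FTS with settling-time function bounded by $T(y(0))$.
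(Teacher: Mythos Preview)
The paper does not provide its own proof of Theorem~\ref{FTS Bhat}; it is stated as a preliminary result quoted from \cite{bhat2000finite}. Your argument---Lyapunov stability from $\dot V\le 0$, followed by direct integration of $\dot v\,v^{-\alpha}\le -c$ on the set $\{v>0\}$ to obtain the explicit settling-time bound $T(y(0))=V(y(0))^{1-\alpha}/(c(1-\alpha))$---is precisely the standard proof from that reference and is correct.
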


\subsection{Main result}
We consider the class of hybrid systems $\mathcal H = \{\mathcal F, \mathcal G , C, D\}$ described as
\begin{equation}\label{hybrid sys}
\begin{split}
    \dot x(t) & = f_{\sigma_f(t,x)}(x(t)), \quad x(t)\in C,\\
    x(t^+) & = g_{\sigma_g(t,x)}(x(t)), \quad x(t)\in D,
\end{split}\end{equation}
where $x\in \mathbb R^n$ is the state vector with $x(t_0) = x_0$, $f_{i}\in \mathcal F \triangleq\{f_k\}$ for $k\in \Sigma_f \triangleq\{1,2,\dots, N_f\}$ is the continuous flow (called thereafter, continuous-time mode, or simply, mode) allowed on the subset of the state space $C\subset\mathbb R^n$, and $g_{j} \in \mathcal G\triangleq\{g_l\}$ for $l\in \Sigma_g \triangleq\{1,2,\dots, N_g\}$ defines the discrete behavior (called thereafter discrete-jump dynamics), which is allowed on the subset $D\subset\mathbb R^n$. Define $x^+(t) \triangleq x(t^+)$. The switching signals $\sigma_f:\mathbb R_+\times\mathbb R^n\rightarrow\Sigma_f$ and $\sigma_g :\mathbb R_+\times\mathbb R^n\rightarrow\Sigma_g$ are assumed to be piecewise constant and right-continuous, in general dependent upon both state and time. We omit the argument $(t,x)$ from the functions $\sigma_f,\sigma_g$ for sake of brevity.


\begin{Remark}
Note that \eqref{hybrid sys} is a generalization of system (1.2) in \cite[Chapter 1]{goebel2012hybrid} that is given as:
\begin{equation}
\begin{split}
    \dot x(t) & = f(x(t)), \quad x(t)\in C,\\
    x(t^+) & = g(x(t)), \quad x(t)\in D,
\end{split}\end{equation}
that describes a hybrid system with one continuous flow $f$, and one discrete-jump dynamics $g$, i.e., $N_f = N_g = 1$. Furthermore, \eqref{hybrid sys} is a special case of system (1.1) in \cite[Chapter 1]{goebel2012hybrid}, given in terms of differential inclusion and difference inclusion as: 
\begin{equation}
\begin{split}
    \dot x(t) & \in F(x), \quad x(t)\in C,\\
    x(t^+) & \in G(x), \quad x(t)\in D,
\end{split}\end{equation}
where $F, G:\mathbb R^n\rightrightarrows\mathbb R^n$ are set-valued maps.
\end{Remark}

\begin{Assumption}\label{assum uniq equib}
The functions $f_i$ are continuous for all $i\in \Sigma_f$. The origin is the only equilibrium point for all the continuous flows and discrete jumps, {i.e., $f_i(x) = 0 \iff x = 0$ for all $i\in \Sigma_f$ and $g_j(x) = 0 \iff x = 0$ for all $j \in \Sigma_g$.}
\end{Assumption}

Per Assumption \ref{assum uniq equib}, we restrict our attention to the case where there is a unique equilibrium for the hybrid system \eqref{hybrid sys}. The case when there exists some $g_j\in \mathcal G$ and a set $\bar D\neq\{0\}$ such that $g_j(x) = 0$ for all $x\in \bar D\subset D$ can be treated by studying stability of set $\bar D$; see \cite{li2019finite,li2016results}. 

{A mode $F\in \Sigma_f$ is called an FTS subsystem or FTS mode if the origin of $\dot y = f_F(y)$ is FTS.} Denote by $T_{i_k} =  [t_{i_k},t_{i_k+1})$ the interval in which the flow $f_i$ is active for the $k-$th time for $i\in \Sigma_f$ and $k\in \mathbb N$, and $t^d_{j_m}$ the time when discrete jump $x^+ = g_j(x)$ takes place for the $m-$th time for $j\in \Sigma_g$ and $m\in \mathbb N$. Define $J_{i} = \{t^d_{j_m}\; | t^d_{j_m}\in T_{i_k}, j\in \Sigma_g, m\in \mathbb N\}$ as the set of all time instances when a discrete jump takes place when the continuous flow $f_i$ is active. Without loss of generality, we assume that the switching signals $\sigma_f$ and $\sigma_g$ are minimal, i.e., for any $i\in \Sigma_f$, $t_{i_{k+1}}\neq t_{i_k+1}$ for all $k\in \mathbb R_+$, and that there no two discrete-jumps at the same time instant. Inspired by \cite[Definition 1]{sun2012integral} and \cite[Definition 2.6]{goebel2012hybrid}, let us define the concepts of the solution of the hybrid system \eqref{hybrid sys} as follows (interested reader is referred to \cite[Chapter 2]{goebel2012hybrid} for detailed presentation on solution concept of hybrid systems). 

\begin{Definition}\label{Def sol hyb sys}
Let $\phi:\mathbb R_+\times\mathbb Z_+\rightarrow\mathbb R^n$ satisfy the following:
\begin{itemize}
    \item for all $j\in \mathbb Z_+$, 
    \begin{itemize}
        \item $\phi(t, j)$ is continuously differentiable for all $t\in \bigcup_k T_{l_k}\setminus(\bigcup_i J_i)$ for all $l\in \Sigma_f$;
        \item $\phi(\cdot, j)$ is absolutely continuous on $\mathbb R_+\setminus (\bigcup_i J_i)$ and right-continuous on $\mathbb R_+$;
        \item $\phi(t, j)$ satisfies $\dot \phi(t,j) = f_{\sigma_f}(\phi(t,j))$, for all $\phi(t,j)\in C$, and for all $t\in \mathbb R_+\setminus (\bigcup_i J_i)$;
    \end{itemize}
    \item for all $t\in \bigcup_i J_i$, $\phi(t, j)$ satisfies $\phi(t, j+1) = g_{\sigma_g}(\phi(t, j))$, for all $\phi(t, j)\in D$, and for all $j\in \mathbb Z_+$;
\end{itemize}
Then, the projection of $\phi(\cdot,\cdot)$ on its first argument, i.e., on continuous-time axis, is a solution of \eqref{hybrid sys}. In other words, a function $x:\mathbb R_+\rightarrow\mathbb R^n$, defined as $x(t) = \phi(t, j)$ for all $j\in \mathbb N$, $t\in \mathbb R_+\setminus(\bigcup_i J_i)$, and $x(t^+) =\phi(t,j+1)$, for all $j\in \mathbb N$ and for all $t\in \bigcup_i J_i$, is called as a solution of \eqref{hybrid sys}. 
\end{Definition}

\begin{Remark}
In \cite{goebel2012hybrid}, the solutions of hybrid systems are described using a hybrid arc $\phi:\mathbb R_+\times\mathbb Z_+\rightarrow\mathbb R^n$, which is parameterized by continuous-time $t\in \mathbb R_+$ and discrete-jump time $j\in \mathbb Z_+$. Since we are only concerned about the stability in the continuous-time $t$, we define the solution of \eqref{hybrid sys} in Definition \ref{Def sol hyb sys} as the projection of the function $\phi$ on the continuous-time axis. 
\end{Remark}

Let $\textnormal{dom}\;\phi\subset\mathbb R_+\times\mathbb Z_+$ denote the domain of definition of function $\phi$. Based on the structure of $\textnormal{dom}\;\phi$, the solutions of \eqref{hybrid sys} can be characterized in various ways, as discussed below.

\begin{Definition}\label{rem sol hyb sys}
The solution of \eqref{hybrid sys} is
\begin{itemize}
    \item \textbf{non-trivial}, if $\textnormal{dom}\;\phi$ contains at least one more point different from $(0,0)$;
    \item \textbf{complete}, if $\textnormal{dom}\;\phi$ is unbounded;
    \item \textbf{Zeno}, if it is complete but the projection of $\textnormal{dom}\;\phi$ on $\mathbb R_+$ is bounded;
    \item \textbf{maximal}, if $\textnormal{dom}\;\phi$ cannot be extended;
    \item \textbf{continuous}, if non-trivial, and $\textnormal{dom}\;\phi\subset \mathbb R_+\times\{0\}$;
    \item \textbf{discrete}, if non-trivial, and $\textnormal{dom}\;\phi\subset \{0\}\times\mathbb Z_+$;
    \item \textbf{eventually continuous}, if non-trivial and $J = \sup_j\textnormal{dom}\;\phi<\infty$;
    \item \textbf{eventually discrete}, if non-trivial and $T = \sup_t\textnormal{dom}\;\phi<\infty$;
    \item \textbf{compact}, if $\textnormal{dom}\;\phi$ is compact.
\end{itemize} 
\end{Definition}

\noindent Before presenting the main result, we make the following assumption on the solution of \eqref{hybrid sys}. Similar assumptions have been used in literature (e.g., \cite{zhao2008stability,li2019finite,sanfelice2007invariance}) in order to analyze stability properties of the origin of hybrid systems. 

\begin{Assumption}\label{assum exist sol conc}
The solution of \eqref{hybrid sys} exists, is non-Zeno, non-trivial and complete. 
\end{Assumption}

\noindent Assumption \ref{assum exist sol conc}, in light of Definitions \ref{Def sol hyb sys} and \ref{rem sol hyb sys}, implies that the solution $x(\cdot)$ of \eqref{hybrid sys} is defined for all times, is continuously differentiable while evolving along any of the continuous flows $f_i$, is absolutely continuous between any two discrete jumps, and is right-continuous at all times.

For each interval $T_{i_k}$, define the largest connected sub-interval $\bar T_{i_k}\subset T_{i_k}$, such that there is no discrete jump in $\bar T_{i_k}$, i.e., $\textrm{int}(\bar T_{i_k})\bigcap J_i = \emptyset$. For example, if $T_{i_1} = [0,1)$ and $J_{i} = \{0.2, 0.4, 0.75\}$, then $\bar T_{i_1} = [0.4, 0.75)$. 

\begin{Assumption}\label{assum bar T tm}
For mode $F\in\Sigma_f$, the length $|\bar T_{F_k}|$ of the time interval $\bar T_{F_k}$ satisfies $|\bar T_{F_k}| \geq t_d>0$ for all $k\in \mathbb N$. 
\end{Assumption}

\noindent Assumption \ref{assum bar T tm} implies that for the FTS flow $f_F$, in each interval $T_{F_k}$ when the system \eqref{hybrid sys} evolves along the flow $f_F$, there exists a sub-interval $\bar T_{F_k}$ of non-zero length $t_d$ such that there is no discrete jump in the system state during $\bar T_{F_k}$.

We first define the notion of FTS for hybrid systems. The standard notion of stability under arbitrary switching, as employed in \cite{liberzon2003switching,lin2009stability,branicky1998multiple,zhao2008stability,fu2015global}, is restrictive in the following sense. The conditions therein require every single mode of the system \eqref{hybrid sys} to be Lyapunov Stable (LS or simply, stable), Asymptotically Stable (AS), or FTS for the origin of the system \eqref{hybrid sys} to be LS, AS, or FTS, respectively. We overcome this limitation by defining the corresponding notions of stability and \textit{uniform} stability for hybrid system (inspired in part, from \cite[Theorem 1]{peleties1991asymptotic}) as following. Let $\Pi\subset \textnormal{PWC}(\mathbb R_+\times \mathbb R^n,\Sigma_f\times \Sigma_g)$ denote the set of all possible pairs of switching signals, where $\textnormal{PWC}$ is the set of all piecewise constant functions mapping from $\mathbb R_+\times\mathbb R^n$ to $\Sigma_f\times \Sigma_g$.

\begin{Definition}\label{Switch FTS Def}
The origin of the hybrid system \eqref{hybrid sys} is called \textbf{LS, AS or FTS} if there exists an open neighborhood $D\subset \mathbb R^n$ such that for all $y\triangleq x(0)\in D$, there exists a subset of switching signals $\Pi_y\subset \Pi$ such that the origin of the system \eqref{hybrid sys} is LS, AS or FTS, respectively, with respect to $(\sigma_f,\sigma_g)$, for all $(\sigma_f,\sigma_g)\in \Pi_y$. If $\Pi_y = \Pi$ for all $y\in D$, then the origin is said to be \textbf{uniformly LS, AS or FTS}. 
\end{Definition} 

Per Definition \ref{Switch FTS Def}, FTS of the origin of \eqref{hybrid sys} is realized under a (given set of) switching signal(s), while uniform FTS of the origin is realized under any arbitrary switching signal. Note that the aforementioned papers use the latter notion of uniform stability in their analysis.


Let $\bar T_{F_k} = [\bar t_{F_k},\bar t_{F_k+1})$ with $ \bar t_{F_k+1}-\bar t_{F_k}\geq t_d$, and $\{\bar V_{F_1}, \bar V_{F_2}, \dots, \bar V_{F_p}\}$ and $\{\bar V_{F_1+1}, \bar V_{F_2+1}, \dots, \bar V_{F_p+1}\}$ be the sequence of the values of the Lyapunov function $V_F$ at the beginning and at end of the intervals $\bar T_{i_k}$, respectively. Let $\{i^0, i^1, \dots,i^l,\ldots\}\in\Sigma_f$ be the sequence of modes that are active during the intervals $[t_{0}, t_{1}), [t_{1},t_{2}), \dots,[t_l,t_{l+1}),\ldots\,$, respectively.
We state the following result before we proceed to the main theorem. 
\begin{Lemma}\label{V diff lemma}
Let $a_i, b_i\geq 0$ are such that $a_i\geq b_i$ for all $i\in \{1, 2, \dots, K\}$ for some $K\in \mathbb N$. Then, for any $0<r<1$, we have 
\begin{align}
    \sum_{i = 1}^K(a_i^r-b_i^r)\leq \sum_{i = 1}^K(a_i-b_i)^r.
\end{align}
\end{Lemma}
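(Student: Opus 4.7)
The plan is to reduce the claim to a term-by-term inequality and then sum. Specifically, I would prove the pointwise statement that for any real numbers $a \geq b \geq 0$ and any $r \in (0,1)$,
\begin{equation*}
a^r - b^r \leq (a-b)^r,
\end{equation*}
and then observe that summing this inequality over $i = 1, \ldots, K$ immediately yields the lemma. No interaction between the summands is needed, since the right-hand side of the claim is already a sum of nonnegative terms.

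For the pointwise inequality, the cleanest route is subadditivity of the map $t \mapsto t^r$ on $\mathbb{R}_+$ for $r \in (0,1)$: for any $x, y \geq 0$,
\begin{equation*}
(x+y)^r \leq x^r + y^r.
\end{equation*}
Applying this with $x = b$ and $y = a-b$ (both nonnegative by hypothesis) gives $a^r \leq b^r + (a-b)^r$, which rearranges to the desired bound.

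To justify subadditivity itself, I would give a one-line calculus argument: fix $y \geq 0$ and define $h(x) = x^r + y^r - (x+y)^r$ on $[0,\infty)$. Then $h(0) = 0$ and
\begin{equation*}
h'(x) = r x^{r-1} - r (x+y)^{r-1} \geq 0,
\end{equation*}
because $r-1 < 0$ makes $t \mapsto t^{r-1}$ nonincreasing, and $x \leq x+y$. Hence $h(x) \geq 0$ for all $x \geq 0$, which is precisely subadditivity. (Alternatively, one can invoke concavity of $t^r$ together with $0^r = 0$.)

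There is no real obstacle here; the only mild subtlety is handling the boundary cases $b = 0$ or $a = b$, where the inequality becomes an equality and the derivative argument must be read as "$h'(x) \geq 0$ on the open interval where it is defined, combined with continuity of $h$ at $0$." The summation step is then trivial and preserves the direction of the inequality since each term $(a_i - b_i)^r$ is nonnegative.
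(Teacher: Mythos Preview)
Your proposal is correct and follows essentially the same route as the paper: both reduce to the pointwise inequality $a^r - b^r \leq (a-b)^r$ via subadditivity of $t \mapsto t^r$ on $\mathbb{R}_+$, and then sum over $i$. The only difference is that the paper obtains subadditivity by citing an external inequality $\left(\sum z_i\right)^r \leq \sum z_i^r$, whereas you supply a self-contained calculus argument; the underlying idea is identical.
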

\noindent The proof is given in  Appendix \ref{app lemma 1 pf}.
We now present our main result on FTS of hybrid systems.

\begin{Theorem}\label{FT Th 2}
The origin of \eqref{hybrid sys} is LS if there exist Lyapunov functions $V_i$ for each $i\in \Sigma_f$ such that the following hold:
\begin{itemize}
 \item[(i)] There exists $\alpha_1 \in \mathcal{GK}$, such that 
\begin{equation}
\begin{split}
   \sum\limits_{k = 0}^{p}\Big(
   V_{i^{k+1}}(x(t_{{k+1}})) -&V_{i^k}(x(t_{{k+1}}))\Big)\leq \alpha_1(\|x_0\|), \label{Hyb cond 1}
\end{split}\end{equation}
holds for all $p\in \mathbb Z_+$;
\item[(ii)] There exists $\alpha_2\in \mathcal{GK}$ such that
\begin{equation}\begin{split}
   \sum\limits_{k = 0}^{p}\Big(
   V_{i^k}(x(t_{{k+1}})) -&V_{i^k}(x(t_{k}))\Big)\leq \alpha_2(\|x_0\|), \label{Hyb cond 2}
\end{split}\end{equation}
holds for all $p\geq 0$;
\item[(iii)] There exists $\alpha_3\in \mathcal{GK}$ such that for all $i\in \Sigma_f$, 
\begin{equation}\begin{split}
    \sum\limits_{t\in J_{i}}\Big(V_i(x^+(t))-V_i(x(t))\Big)& \leq \alpha_3(\|x_0\|); \label{Hyb cond 3}.
\end{split}\end{equation}
If, in addition, there exist switching signals $(\sigma_f,\sigma_g)$ and,
\item[(iv)] There exists $F\in \Sigma_f$ such that the origin of $\dot y = f_F(y)$ is FTS, and there exist a positive definite Lyapunov function $V_F$ and constants $c>0$, $0<\beta<1$ such that 
\begin{equation}\label{v dot cond}
\dot V_F \leq -cV_F^\beta,    
\end{equation}
for all $t\in \bigcup [t_{F_k},t_{F_k+1})\setminus J_F$;
\item[(v)] The accumulated duration $\bar T_F \triangleq \sum_k{\bar T}_{F_k}$ corresponding to the period of time during which the mode $F$ is active without any discrete jumps, satisfies
\begin{align*}
\bar T_F = \gamma (\|x_0\|) \triangleq  \frac{\alpha(\|x\|)^{1-\beta}}{c(1-\beta)} + \frac{\bar \alpha(\|x\|)^{1-\beta}}{c(1-\beta)},
\end{align*}
where $\alpha = \alpha_0+\alpha_1+\alpha_2+N_f\alpha_3$, $\bar \alpha = \alpha_1+\alpha_2+N_f\alpha_3$ and $\alpha_0 \in \mathcal{GK}$,
\end{itemize}
then, the origin of \eqref{hybrid sys} is FTS with respect to the switching signal $(\sigma_f,\sigma_g)$. Moreover, if all the conditions hold globally, the functions $V_i$ are radially unbounded for all $i\in \Sigma_f$, and $\alpha_l\in \mathcal{GK}_\infty$ for $l\in \{1,2,3\}$, then the origin of \eqref{hybrid sys} is globally FTS. 
\end{Theorem}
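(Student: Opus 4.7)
The argument proceeds in two steps: first I establish Lyapunov stability from (i)--(iii), then I upgrade to FTS using (iv)--(v), bridging $V_F$ and $V_F^{1-\beta}$ via Lemma~\ref{V diff lemma}. For LS, fix any $t\in[t_p,t_{p+1})$ and write the telescoping identity
\begin{align*}
V_{i^p}(x(t)) = V_{i^0}(x_0) &+ \sum_{k=0}^{p-1}\bigl(V_{i^{k+1}}(x(t_{k+1})) - V_{i^k}(x(t_{k+1}))\bigr) \\
&+ \sum_{k=0}^{p-1}\bigl(V_{i^k}(x(t_{k+1})) - V_{i^k}(x(t_k))\bigr) + \bigl(V_{i^p}(x(t))-V_{i^p}(x(t_p))\bigr).
\end{align*}
Bound $V_{i^0}(x_0)\le\alpha_0(\|x_0\|)$ via continuity/positive definiteness of $V_{i^0}$. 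Condition (i) controls the first sum by $\alpha_1$; condition (ii) controls the second sum by $\alpha_2$; and a combination of (ii) and (iii) summed over all $N_f$ modes absorbs the trailing partial-interval contribution (the continuous change plus jumps of $V_{i^p}$ on $[t_p,t)$). Summing yields $V_{i^p}(x(t))\le\alpha(\|x_0\|)$ with $\alpha=\alpha_0+\alpha_1+\alpha_2+N_f\alpha_3\in\mathcal{GK}$, and positive definiteness of each $V_i$ converts this into an Euclidean bound on $\|x(t)\|$, proving LS.

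\textbf{FTS step.} Now assume (iv)--(v). Integrating (iv) on each jump-free sub-interval $\bar T_{F_k}=[\bar t_{F_k},\bar t_{F_k+1})$ via $\tfrac{d}{dt}V_F^{1-\beta}=(1-\beta)V_F^{-\beta}\dot V_F\le-c(1-\beta)$ gives $u_k^{1-\beta}-v_k^{1-\beta}\ge c(1-\beta)\,|\bar T_{F_k}|$, where $u_k:=V_F(\bar t_{F_k})$ and $v_k:=V_F(\bar t_{F_k+1})$. Summing over $k=1,\ldots,p$ and invoking Lemma~\ref{V diff lemma}:
\[
c(1-\beta)\,\bar T_F^{(p)}\ \le\ \sum_{k=1}^p (u_k^{1-\beta}-v_k^{1-\beta})\ \le\ \sum_{k=1}^p (u_k-v_k)^{1-\beta}.
\]
The total FTS drop $\sum_k(u_k-v_k)$ in $V_F$ equals $V_F(x_0)-V_F(x(\bar t_{F_p+1}))$ plus the accumulated non-FTS rises in $V_F$. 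Using the LS bound $V_F(x_0)\le\alpha(\|x_0\|)$, and extracting the $V_F$-contributions from (i) (switches), (ii) (within-mode evolution of $V_F$), and (iii) applied across the $N_f$ modes (jumps), the total non-FTS rise in $V_F$ is bounded by $\bar\alpha(\|x_0\|)=\alpha_1+\alpha_2+N_f\alpha_3$. A single-pair reapplication of Lemma~\ref{V diff lemma} --- equivalently the subadditivity $(a+b)^{1-\beta}\le a^{1-\beta}+b^{1-\beta}$ for $a,b\ge 0$ --- collapses the right-hand side to $\alpha(\|x_0\|)^{1-\beta}+\bar\alpha(\|x_0\|)^{1-\beta}$. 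Consequently the dwell hypothesis (v) forces $v_p=0$ at $\bar T_F=\gamma(\|x_0\|)$, proving FTS; the global conclusion follows by taking radially unbounded $V_i$ and $\alpha_l\in\mathcal{GK}_\infty$.

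\textbf{Main obstacle.} The nonroutine step is the two separate uses of Lemma~\ref{V diff lemma}: once term-wise to turn $\sum(u_k^{1-\beta}-v_k^{1-\beta})$ into $\sum(u_k-v_k)^{1-\beta}$, and once with $K=1$ to reassemble the latter into the two $(1-\beta)$-power terms appearing in~$\gamma$. Matching the bookkeeping --- tracking how $V_F$ evolves through mode switches, jumps in mode $F$, and activity/jumps of the other modes so that the cumulative non-FTS rise in $V_F$ indeed fits the $\bar\alpha$ bound --- is what forces the factor $N_f\alpha_3$ rather than merely $\alpha_3$, and is the only delicate part of the proof; everything else reduces to routine integration and telescoping.
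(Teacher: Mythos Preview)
Your overall strategy mirrors the paper's proof: the LS argument is the same telescoping decomposition (switching increments, within-mode increments, jump increments) bounded by $\alpha=\alpha_0+\alpha_1+\alpha_2+N_f\alpha_3$, and the FTS argument integrates $\dot V_F\le -cV_F^\beta$ on each $\bar T_{F_k}$ and then invokes Lemma~\ref{V diff lemma}. The one structural difference is that the paper does \emph{not} apply Lemma~\ref{V diff lemma} to the full sum $\sum_k(u_k^{1-\beta}-v_k^{1-\beta})$. Instead it first rewrites this telescopically as
\[
u_1^{1-\beta}\;+\;\sum_{i=1}^{M-1}\bigl(u_{i+1}^{1-\beta}-v_i^{1-\beta}\bigr)\;-\;v_M^{1-\beta},
\]
bounds the leading term directly by $\alpha(\|x_0\|)^{1-\beta}$ (a single term, since $u_1\le\alpha(\|x_0\|)$ from the LS bound), and applies Lemma~\ref{V diff lemma} only to the positive summands of the middle ``between-interval'' sum. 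This is what produces the two separate pieces $\alpha^{1-\beta}$ and $\bar\alpha^{1-\beta}$ in $\gamma$; your route tries to extract both from the same collapsed sum.

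That collapse step is where your argument has a genuine gap. You want to pass from $\sum_k(u_k-v_k)^{1-\beta}$ to $\alpha^{1-\beta}+\bar\alpha^{1-\beta}$ knowing only $\sum_k(u_k-v_k)\le\alpha+\bar\alpha$. But for $r=1-\beta\in(0,1)$ and $z_k\ge 0$ one has $\sum_k z_k^{\,r}\ge\bigl(\sum_k z_k\bigr)^{r}$; subadditivity (Lemma~\ref{V diff lemma} with $K=1$) bounds $\sum z_k^{\,r}$ from \emph{below}, not above. With $p$ equal terms $z_k=(\alpha+\bar\alpha)/p$ one gets $\sum z_k^{1-\beta}=p^{\beta}(\alpha+\bar\alpha)^{1-\beta}$, unbounded in $p$, so a bare bound on $\sum_k(u_k-v_k)$ cannot deliver the claimed estimate. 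The paper's organization sidesteps this for the $\alpha^{1-\beta}$ piece by isolating it as a single term; it is worth noting, though, that the paper's own passage from $\sum_{i\in I_1}(\bar V_{F_{i+1}}-\bar V_{F_i+1})^{1-\beta}$ to $\bar\alpha(\|x_0\|)^{1-\beta}$ invokes the same problematic direction, so you are not missing a device that the paper actually supplies.
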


\begin{proof}
First we prove the stability of the origin under conditions (i)-(iii). Let $x_0\in D$, where $D$ is some open neighborhood of the origin. For all $p\in \mathbb Z_+$, we have that
\begin{align*}
    V_{i^p}(x(t_p)) = & V_{i^0}(x(t_{0})) + \sum\limits_{k = 1}^{p}
   \Big(V_{i^k}(x(t_{k})) -V_{i^{k-1}}(x(t_{k}))\Big) \\
   & + \sum\limits_{k = 0}^{p-1}
   \Big(V_{i^k}(x(t_{{k+1}})) -V_{i^k}(x(t_k))\Big)\\
   & + \sum\limits_{k = 0}^{p}\sum_{t\in J_{k}\bigcap [t_k, t_{k+1})}
   \Big(V_{i^k}(x(t^+)) -V_{i^k}(x(t))\Big)\\
    \overset{\eqref{Hyb cond 1},\eqref{Hyb cond 2},\eqref{Hyb cond 3}}{\leq}& \alpha(\|x_0\|)
\end{align*}
where $\alpha = \alpha_0 + \alpha_1+\alpha_2 + N_f\alpha_3$ with $\alpha_0(r) = \max\limits_{i\in \Sigma_f, \; \|x\|\leq r}V_i(x)$.
\begin{figure}[ht!]
	\centering
	\includegraphics[width=0.8\columnwidth,clip]{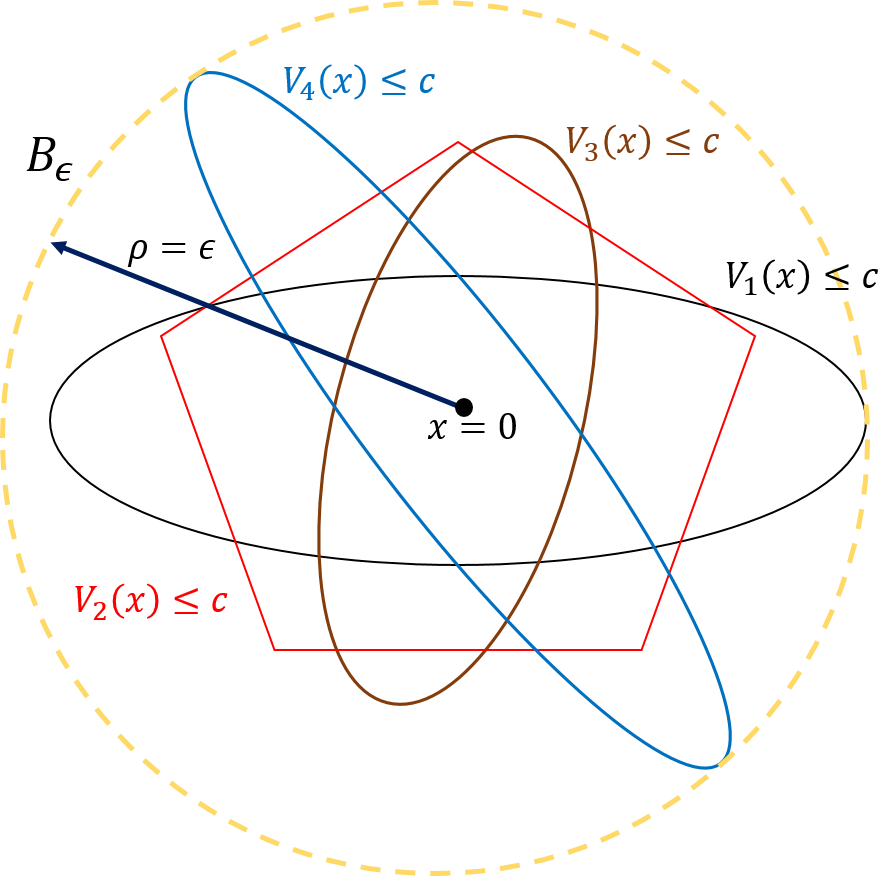}
	\caption{The ball $B_\rho$, shown in dotted yellow, encloses $c$ sublevel sets of the Lyapunov functions $V_i$, whose boundaries are shown in solid lines.}
	\label{fig:V i enclos}
\end{figure}
Thus, we have:
\begin{align}\label{v_ip ineq stability}
    V_{i^p}(x(t_p))\leq \alpha(\|x_0\|),
\end{align}
for all $p\in \mathbb Z_+$. 
Let $d_i(c) = \{x \; |\; V_i(x) \leq c\}$ denote the $c$ sub-level set of the Lyapunov function $V_i$, $i\in \Sigma_f$, and $B_\rho = \{x\; |\; \|x\| \leq \rho\}$ denote a ball centered at the origin with radius $\rho\in \mathbb R_+$. Define $r(c) =  \inf\{\rho \geq 0\; |\; d_i(c)\subset B_\rho\}$ as the radius of the smallest ball centered at the origin that encloses the $c$ sub-level sets $d_i(c)$, for all $i\in \Sigma_f$ (see Figure \ref{fig:V i enclos}). Since the functions $V_i$ are positive definite, the sub-level sets $d_i(c)$ are bounded for small $c>0$, and hence, the function $r$ is invertible. The inverse function $c_\epsilon = r^{-1}(\epsilon)$ maps the radius $\epsilon>0$ to the value $c_\epsilon$ such that the sub-level sets $d_i(c_\epsilon)$ are contained in $B_\epsilon$ for all $i\in \Sigma_f$. For any given $\epsilon>0$, choose $\delta = \alpha^{-1}(r^{-1}(\epsilon))>0$ so that \eqref{v_ip ineq stability} implies that for $\|x_0\|\leq \delta$, we have
\begin{align*}
    V_{i^p}(x(t_p)) &\leq \alpha(\|x_0\|) \leq \alpha(\alpha^{-1}(r^{-1}(\epsilon))) = r^{-1}(\epsilon)\\
    \implies 
    \|x(t_p)\| &\leq \epsilon,
\end{align*}
for all $p\in \mathbb Z_+$, i.e., the origin is LS. 

Next, we prove FTS of the origin when conditions (iv)-(v) also hold. 
From \eqref{v_ip ineq stability}, we have that 
\begin{align}\label{v f a}
 V_F(x(t_{F_i})) \leq \alpha(\|x_0\|), 
\end{align}
for all $i\in \mathbb N$. By definition, we have that there is no discrete jump during $\bar T_{F_k}$, for all $k\in \mathbb N$. Let $M\in \mathbb N$ denote the total number of times the mode $F$ is activated. From condition (iv), we have $\dot V_F \leq -cV_F^\beta$ for all $t\in \bigcup\bar T_{F_k}$. Using this, we obtain that for any $M$, we have
\begin{equation*}\begin{split}
|\bar T_{F_k}|&  \leq \frac{\bar V_{F_k}^{1-\beta}}{c(1-\beta)}-\frac{\bar V_{F_k+1}^{1-\beta}}{c(1-\beta)} \\
\implies \sum\limits_{k = 1}^M |\bar T_{F_k}| & \leq \sum\limits_{k = 1}^M\Big(\frac{\bar V_{F_k}^{1-\beta}}{c(1-\beta)}-\frac{\bar V_{F_k+1}^{1-\beta}}{c(1-\beta)}\Big)\\
& = \frac{\bar V_{F_1}^{1-\beta}}{c(1-\beta)} + \sum\limits_{i = 1}^
{M-1}\frac{\bar V_{F_{i+1}}^{1-\beta}-\bar V_{F_i+1}^{1-\beta}}{c(1-\beta)} -\frac{\bar V_{F_M+1}^{1-\beta}}{c(1-\beta)}. 
\end{split}\end{equation*}
Using \eqref{v f a}, we obtain that 
\begin{align}
    \frac{V_{F_1}^{1-\beta}}{c(1-\beta)} \leq \frac{\alpha(\|x_0\|)^{1-\beta}}{c(1-\beta)}.
\end{align} 
Define $\gamma_1(\|x_0\|) \triangleq \frac{\alpha(\|x_0\|)^{1-\beta}}{c(1-\beta)}$ so that $\gamma_1\in \mathcal{GK}$. Now, let $I_1 = \{i_1, i_2, \dots, i_k\}, 0\leq i_l\leq M,$ be the set of indices such that $\bar V_{F_{i+1}}\geq \bar V_{F_i+1}$ for $i\in I_1$. We know that for $a, b\geq 0$, $a\geq b \implies a^r\geq b^r$ for $r> 0$. Hence, we have that
\begin{align}
\sum\limits_{i = 1}^{M-1}\frac{\bar V_{F_{i+1}}^{1-\beta}-\bar V_{F_i+1}^{1-\beta}}{c(1-\beta)} \leq \sum\limits_{i \in I_1}\frac{\bar V_{F_{i+1}}^{1-\beta}-\bar V_{F_i+1}^{1-\beta}}{c(1-\beta)}
\end{align}
Using Lemma \ref{V diff lemma}, we obtain that 
\begin{align}
\sum\limits_{i \in I_1}\frac{\bar V_{F_{i+1}}^{1-\beta}-\bar V_{F_i+1}^{1-\beta}}{c(1-\beta)}\leq \sum\limits_{i \in I_1}\frac{(\bar V_{F_{i+1}}-\bar V_{F_i+1})^{1-\beta}}{c(1-\beta)}.
\end{align}
From the analysis in the first part of the proof, we know that 
\begin{align*}
    \bar V_{F_{i+1}}-\bar V_{F_i+1} & = \sum\limits_{k = l_1}^{l_2-1}\Big(
   V_{i^k}(x(t_{{k}})) -V_{i^{k-1}}(x(t_{{k}}))\Big)\\
   & +\sum\limits_{k = l_1}^{l_2-1}\Big(
   V_{i^k}(x(t_{{k+1}}))-V_{i^k}(x(t_{k}))\Big)\\
   & + \sum\limits_{k = l_1}^{l_2-1}\sum_{t\in J_{k}\bigcap [t_k, t_{k+1})}\Big(V_i(x^+(t))-V_i(x(t))\Big),
\end{align*}
where $l_1, l_2$ are such that $t_{l_1}$ denotes the time when mode $F$ becomes deactivated for the $i$-th time and $t_{l_2}$ denotes the time when the mode $F$ is activated for $(i+1)$-th time. Define $\bar \alpha = \alpha_1+\alpha_2 + N_f\alpha_3$ so that we have 
\begin{align*}
    \sum_{i\in I_1}\bar V_{F_{i+1}}-\bar V_{F_i+1} \leq \bar \alpha(\|x_0\|). 
\end{align*}
Hence, we have that
\begin{align}\label{v f need}
\sum\limits_{i = 1}^{M-1}\frac{\bar V_{F_{i+1}}^{1-\beta}-\bar V_{F_i+1}^{1-\beta}}{c(1-\beta)} \leq \frac{\sum\limits_{i \in I_1}(\bar V_{F_{i+1}}-\bar V_{F_i+1})^{1-\beta}}{c(1-\beta)}\leq \frac{\bar \alpha(\|x_0\|)^{1-\beta}}{c(1-\beta)}.
\end{align}
Define $\gamma(\|x_0\|) \triangleq \gamma_1(\|x_0\|) + \frac{\bar \alpha(\|x_0\|)^{1-\beta}}{c(1-\beta)}$ so that we obtain:
\begin{equation*}\begin{split}
   \bar T_F + \frac{\bar V_{F_M+1}^{1-\beta}}{c(1-\beta)} &\leq \frac{\bar V_{F_1}^{1-\beta}}{c(1-\beta)} + \sum\limits_{i = 1}^
{M-1}\frac{\bar V_{F_{i+1}}^{1-\beta}-\bar V_{F_i+1}^{1-\beta}}{c(1-\beta)}\leq  \gamma(\|x_0\|).
\end{split}\end{equation*}
Clearly, $\gamma \in \mathcal{GK}$. Now, with $\bar T_F  = \gamma(\|x_0\|)$, we obtain 
\begin{equation*}\begin{split}
     \bar T_F+\frac{\bar V_{F_M+1}^{1-\beta}}{c(1-\beta)}\leq \gamma(\|x_0\|) =  \bar T_F,
\end{split}\end{equation*}
which implies that $\frac{\bar V_{F_M+1}^{1-\beta}}{c(1-\beta)} \leq 0$. However, $\bar V_F\geq 0$, which further implies that $\bar V_{F_M+1} = 0$. Hence, if mode $F$ is active for the accumulated time $\bar T_F$ without any discrete jump in the system state, the value of the function $V_F$ converges to $0$ as $t\rightarrow \bar t_{F_M+1}$. 

From Assumption \ref{assum bar T tm}, $|\bar T_{F_k}|\geq t_d$ for all $k\in \mathbb N$, and hence $Mt_d \leq \sum\limits_{i = 1}^M|\bar T_{F_k}| =  \gamma(\|x_0\|)$, which implies that $M\leq \frac{\gamma(\|x_0\|)}{t_d} <\infty$ (i.e., the number of times the mode $F$ is activated is finite). Next we show that the time of convergence is also finite, i.e., $\bar t_{F_M+1} <\infty$. From the above analysis, we have that if $\sum\limits_{i = 1}^M |\bar T_{F_i}| =  \bar T_F$, then there exists an interval $[\bar t_{F_M},\bar t_{F_M+1})$ such that $\bar V_{F_M+1} = 0$. Since $\bar V_{F_i} \leq \bar\alpha(\|x_0\|)<\infty$, we obtain that
\begin{align}
\bar t_{F_M+1}-\bar t_{F_M} \leq \frac{\bar V_{F_M}^{1-\beta}-\bar V_{F_M+1}^{1-\beta}}{c(1-\beta)}\leq \frac{\bar V_{F_M}^{1-\beta}}{c(1-\beta)} <\infty , 
\end{align}
for all $x_0\in D$. Now, there are two cases possible. If $\bar t_{F_M}<\infty$, then, we obtain that $\bar t_{F_M+1} \leq \bar t_{F_M} + \frac{\bar\alpha(\|x_0\|)^{1-\beta}}{c(1-\beta)}<\infty$ for all $x_0\in D$. If $\bar t_{F_M} = \infty$, we obtain that the time of activation for mode $F$ is $\bar T_F = \sum\limits_{i = 1}^{M-1}|\bar T_i| <\gamma(\|x_0\|)$, which however contradicts condition (v). Thus, for condition (v) to hold, it is required that $\bar t_{F_M}<\infty$ and therefore $\bar t_{F_M+1}<\infty$. Hence, the trajectories of \eqref{hybrid sys} reach the origin within a finite number of active intervals of the continuous flow $f_F$. This proves that the origin is FTS.

Finally, if all the conditions (i)-(v) hold globally and the functions $V_i$ are radially unbounded, we have that $\alpha_0$ is also radially unbounded. Since $\alpha_1, \alpha_2, \alpha_3,\alpha_4\in \mathcal{GK}_\infty$, we have $\bar\alpha(\|x_0\|) <\infty$ for all $\|x_0\|<\infty$, and hence, we obtain $t_{F_M+1} \leq t_{F_M} + \frac{\bar\alpha(\|x_0\|)^{1-\beta}}{c(1-\beta)}<\infty$ for all $x_0$, which implies global FTS of the origin.  
\end{proof}

\begin{Remark}
Theorem \ref{FT Th 2} essentially says that a set of sufficient conditions for FTS of the origin of \eqref{hybrid sys} are as follows: a) the origin is uniformly LS; b) there exists a FTS mode $F\in \Sigma_f$, and a function $V_F$ that satisfies \eqref{v dot cond}; and that c) the FTS mode $F$ is active for a sufficient amount of cumulative time, which depends upon the initial conditions. This is formally stated in the following corollary. 
\end{Remark}

\begin{Cor}\label{cor LS FTS mode}
Suppose that the origin of \eqref{hybrid sys} is uniformly LS, and that there exists an FTS mode $F\in \Sigma_f$ and a corresponding positive definite function $V_F$ satisfying \eqref{v dot cond}. Then, the origin is FTS if there exists a switching signal $(\sigma_f,\sigma_g)$ such that the FTS mode $F$ is active for a cumulative time of $\bar \gamma(\|x_0\|)$, where $\bar\gamma\in \mathcal{GK}$. 
\end{Cor}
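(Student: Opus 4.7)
The plan is to obtain this corollary as a direct specialization of Theorem \ref{FT Th 2}, reading off its hypotheses from the qualitative assumption of uniform Lyapunov stability.

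First I would use uniform LS of the origin (per Definition \ref{Switch FTS Def}) to derive a uniform trajectory bound: there exists $\eta\in\mathcal{GK}$ such that $\|x(t)\|\le \eta(\|x_0\|)$ for every initial condition in a neighborhood of the origin and every admissible switching signal. Combined with positive definiteness and continuity of $V_F$, this yields $V_F(x(t))\le \bar V(\|x_0\|)$ with $\bar V\in\mathcal{GK}$. In particular, at the start $\bar t_{F_k}$ of every active interval $\bar T_{F_k}$ of mode $F$ (continuous activation with no discrete jump), one has $V_F(x(\bar t_{F_k}))\le \bar V(\|x_0\|)$. This plays the same role as the bound \eqref{v_ip ineq stability} in the proof of Theorem \ref{FT Th 2}; equivalently, the Lyapunov functions certifying uniform LS directly supply class-$\mathcal{GK}$ functions $\alpha_1,\alpha_2,\alpha_3$ fulfilling conditions (i)--(iii) of that theorem.

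Next I would invoke the second half of the proof of Theorem \ref{FT Th 2} essentially verbatim. Integrating \eqref{v dot cond} over $\bar T_{F_k}$ gives $|\bar T_{F_k}|\le (\bar V_{F_k}^{1-\beta}-\bar V_{F_k+1}^{1-\beta})/(c(1-\beta))$. Summing, splitting the sum into telescoping and residual parts, and bounding the residual by Lemma \ref{V diff lemma} together with the uniform bound $\bar V$ from uniform LS yields
\begin{equation*}
\bar T_F + \frac{\bar V_{F_M+1}^{1-\beta}}{c(1-\beta)} \;\le\; \gamma(\|x_0\|),
\end{equation*}
for some $\gamma\in\mathcal{GK}$ depending only on $c$, $\beta$, and the bounds from uniform LS. Since $\mathcal{GK}$ is closed under sums and under composition with the power $r\mapsto r^{1-\beta}$ (noting $1-\beta\in(0,1)$ and $r\mapsto r^{1-\beta}\in\mathcal{K}$), the claim $\bar\gamma:=\gamma\in\mathcal{GK}$ follows immediately. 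If the switching signal guarantees cumulative activation $\bar T_F\ge \bar\gamma(\|x_0\|)$, the displayed inequality forces $\bar V_{F_M+1}=0$, so the trajectory reaches the origin.

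To conclude FTS rather than mere convergence, I would invoke Assumption \ref{assum bar T tm} to bound the number of activations $M\le \bar\gamma(\|x_0\|)/t_d<\infty$, and bound the length of the terminal active interval by $\bar V_{F_M}^{1-\beta}/(c(1-\beta))<\infty$; this ensures that the time of convergence is finite. The main technical subtlety lies in the first step—extracting class-$\mathcal{GK}$ summation bounds on the $V_i$ from the qualitative notion of uniform LS—but since the estimate \eqref{v_ip ineq stability} in Theorem \ref{FT Th 2} is itself the converse statement (conditions (i)--(iii) imply uniform boundedness), no new machinery is required beyond what is already established, and the corollary follows.
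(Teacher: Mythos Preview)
Your first step coincides with the paper's proof: from uniform LS obtain a class-$\mathcal{GK}$ trajectory bound $\|x(t)\|\le\alpha_4(\|x_0\|)$ (the paper cites \cite[Lemma 4.5]{khalil1996noninear}), and combine it with positive definiteness of $V_F$ to get $V_F(x(t))\le\alpha(\|x_0\|)$ for all $t$.

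From there your route diverges and is more elaborate than the paper's. You propose to reconstruct conditions (i)--(iii) of Theorem~\ref{FT Th 2} from uniform LS and then re-run the telescoping/Lemma~\ref{V diff lemma} argument. The paper does something much simpler: since every $\bar V_{F_k}\le\alpha(\|x_0\|)$, it just drops the negative terms in $\sum_k\big(\bar V_{F_k}^{1-\beta}-\bar V_{F_k+1}^{1-\beta}\big)/(c(1-\beta))$ and bounds the sum by $M\,\alpha(\|x_0\|)^{1-\beta}/(c(1-\beta))$, declaring this the required $\bar\gamma$. No telescoping, no Lemma~\ref{V diff lemma}, and crucially no need to manufacture $\alpha_1,\alpha_2,\alpha_3$ from uniform LS---that converse claim (which you flag as the ``main technical subtlety'') is neither proved in the paper nor required. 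Your approach would go through, but it imports machinery (bounding cumulative \emph{increments} $\sum_i(\bar V_{F_{i+1}}-\bar V_{F_i+1})$ by a $\mathcal{GK}$ function) that the uniform pointwise bound on $V_F$ renders superfluous; the paper's shortcut trades that for an explicit dependence on the number $M$ of activations, which is then handled via Assumption~\ref{assum bar T tm}.
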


The proof is given in Appendix \ref{app:proof cor LS FTS mode}. In light of this observation, Theorem \ref{FT Th 2} can be further interpreted as: If uniform stability of the origin can be established, then the presence of a switching signal and an FTS mode such that the latter is active for a sufficient amount of time is sufficient to guarantee FTS of the origin for the overall system. Note that uniform stability, and not just stability, of the origin is required in the above result as there might exist cases where the origin is stable for a particular pair of switching signals $(\sigma_f,\sigma_g)$ such that FTS mode $F$ is not active at all, and switching to the FTS mode leads to instability (see \cite{branicky1998multiple} for an example of the case where introducing an AS mode results into instability of the origin that is otherwise AS). Uniform stability ensures that the origin is stable under arbitrary switching signals, ruling out such a possibility.

To assess uniform Lyapunov stability of the origin, one can use either the conditions in terms of multiple generalized Lyapunov functions in \cite{zhao2008stability}, or conditions in terms of a common Lyapunov function \cite{liu2016lyapunov,goebel2016notions}; then, the conditions (iv)-(v) of Theorem \ref{FT Th 2} can be checked independently to establish FTS of the origin. 

\begin{Remark}
In practice, the conditions (i)-(iii) or those presented in \cite{zhao2008stability} can be difficult to verify for a general class of hybrid system involving non-linear subsystems; the study of finding Lyapunov functions to assess stability for hybrid systems is an open field of research, and is out of scope of this work. In Section \ref{sec switch signal}, we present a method of designing switching signal $\sigma_f$ and functions $V_i$ for a class of switched linear systems. 
\end{Remark}




\subsection{Discussion on the main result}

\begin{figure}[ht!]
	\centering
	\includegraphics[width=0.9\columnwidth,clip]{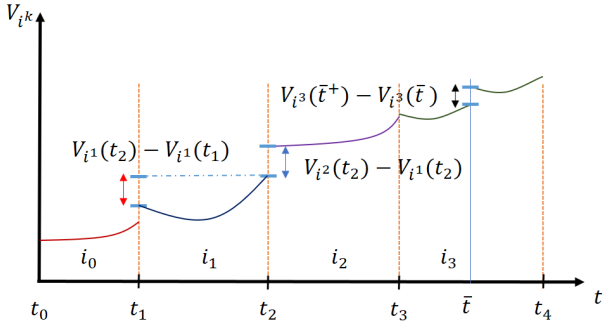}
	\caption{Conditions (i), (ii) and (iii) of Theorem \ref{FT Th 2} regarding the allowable changes in the values of the  Lyapunov functions. The increments shown by blue, red and black double-arrows pertain to condition (i), (ii) and (iii), respectively. }
	\label{fig:V i incr}
\end{figure}

\textbf{Intuitive explanation of the conditions of Theorem \ref{FT Th 2}}: Condition (i) means that at switching instants of the dynamics of continuous flows (i.e., at switches in signal $\sigma_f$), the cumulative value of the differences between the consecutive Lyapunov functions is bounded by a class-$\mathcal{GK}$ function. Condition (ii) means that the cumulative increment in the values of the individual Lyapunov functions when the respective modes are active, is bounded by a class-$\mathcal{GK}$ function (see Figure \ref{fig:V i incr}).\footnote{Note that some authors use the time derivative condition, i.e., $\dot V_i\leq \lambda V_i$ with $\lambda>0$, in place of condition (ii), to allow growth of $V_i$, hence, requiring the function to be continuously differentiable (e.g. \textnormal{\cite{wang2018conditions}}). Our condition allows the use of non-differentiable  Lyapunov functions.} Condition (iii) means that the cumulative increment in the value of the  Lyapunov function $V_i$ is bounded by a class-$\mathcal{GK}$ function at the discrete jumps. Condition (iv) means that there exists an FTS mode $F\in \Sigma$. Finally, condition (v) means that the FTS mode $F$ is active for a sufficiently long cumulative time $\gamma(\|x_0\|)$ without any discrete jump occurring in that cumulative period. Depending upon the application at hand, and available authority on the design of the switching signal, the FTS mode can be made active for $\gamma(\|x_0\|)$ duration in one activation period, or in multiple activation periods. 


\textbf{Comparison with earlier results}: In contrast to \cite[Proposition 3.8]{zhao2008stability}, where the authors provided necessary and sufficient conditions for stability of switched system under \eqref{v_ip ineq stability} and non-increasing condition on the Lyapunov functions during activation period, we proved stability of the origin with just \eqref{v_ip ineq stability}. Compared to \cite{li2019finite,li2016results}, our results are less conservative in the sense that the  Lyapunov functions are allowed to increase during the continuous flows (per \eqref{Hyb cond 2}), as well as at the discrete jumps (per \eqref{Hyb cond 3}). In other words, we allow unstable modes as well as unstable discrete-jumps (i.e., discrete jump $x^+ = g(x)$ such that $V(g(x))>V(x)$ for some positive definite function $V$) to be present in the hybrid system while still guaranteeing FTS of the origin. Also, during the continuous flows, the  Lyapunov functions are allowed to grow when switching from one continuous flow to another (per \eqref{Hyb cond 1}), whereas the aforementioned work imposes that the common Lyapunov function is always non-increasing. In contrast to some of the previous work e.g., \cite{li2019finite,zhang2018finite,wang2018conditions,cai2008smooth}, except for $V_F$ we do not require the Lyapunov functions to be differentiable.

\textbf{The usefulness of Lemma 1}: In general, the inequality \eqref{v f need} can be difficult to obtain directly. Consider the case when we only know that the mode $F$ is homogeneous, with negative degree of homogeneity. From \cite[Theorem 7.2]{bhat2005geometric}, we know that condition (iii) of Theorem \ref{FT Th 2} holds for some $\beta$, but its exact value might not be known. In this case, it is not possible to bound the left-hand side (LHS) of \eqref{v f need}. Lemma \ref{V diff lemma} allows one to bound this LHS with a class-$\mathcal{GK}$ function without explicitly knowing the value of $\beta$. 

\textbf{Remarks on Assumption \ref{assum bar T tm}}: We discuss the relation of Assumption \ref{assum bar T tm} with the average dwell-time (ADT) for discrete jumps method that is often used in the literature \cite{teel2013lyapunov}. Discrete jumps under ADT means that in any given interval $[t_1, t_2]$, the number of discrete jumps of the states $N([t_1,t_2])$ satisfies $N([t_1,t_2])\leq N_0 + \delta(t_2-t_1)$, where $\delta\geq 0$ and $N_0\geq 1$. (see \cite{teel2013lyapunov,cai2008smooth}). We note that if the conditions of the ADT method hold for the mode $F$, instead of the minimum dwell-time condition of Assumption \ref{assum bar T tm}, then the FTS result still holds since the parameter $M$ used in the proof of Theorem \textnormal{\ref{FT Th 2}} can be defined as $M = N_0 + \delta T_F$, where $T_F$ is the total time of activation of mode $F$.

Note that if Assumption \ref{assum bar T tm} does not hold, one can construct a counter-example where the system would need to execute a Zeno behavior in order to achieve FTS. Consider the case when $|\bar T_{F_k}|= \frac{\bar T_F}{2^k}$. It is clear that for $\sum\limits_{k = 1}^M|\bar T_{F_k}| \rightarrow \bar T_F$, we need $M\rightarrow\infty$. Hence, this would require infinite number of switches from some mode $i\neq F$ to the mode $F$ in a finite amount of time $\bar T_F$, i.e., the system would need to execute a Zeno behavior to achieve FTS. With Assumption \ref{assum bar T tm}, we obtain that $M \leq \frac{\bar T_F}{t_d} <\infty$, which rules out this possibility. 




\section{FTS of Switched Systems}\label{sec Switch System}

\subsection{FTS result}

In this subsection,  we illustrate how the case of switched systems, i.e., of systems without discrete jumps in their states, is a special case of the results derived above. In summary, in the case of a switched system, Theorem \ref{FT Th 2} guarantees FTS of the origin under conditions (i), (ii), (iv) and (v); condition (iii) is obsolete since $D = \emptyset$. As a side note, if in addition to $D=\emptyset$, one has that $N_f = 1$, i.e., if the system \eqref{hybrid sys} reduces to a continuous-time dynamical system, Theorem \ref{FT Th 2} reduces to Theorem \ref{FTS Bhat}. Thus, the seminal result on FTS of continuous-time systems is a special case of Theorem \ref{FT Th 2}. 

Consider the system
\begin{equation}\begin{split}\label{switch sys}
    \dot x(t) = f_{\sigma(t,x)}(x(t)), \quad x(t_0) = x_0,
\end{split}\end{equation}
where $x\in \mathbb R^n$ is the system state, $\sigma: \mathbb R_+\times\mathbb R^n\rightarrow \Sigma$ is a piecewise constant, right-continuous switching signal that can depend both upon state and time, $ \Sigma \triangleq \{1, 2, \dots, N\}$ with $N<\infty$, and $f_{\sigma(\cdot,\cdot)}:\mathbb R^n\rightarrow\mathbb R^n$ is the system vector field describing the active subsystem (called thereafter mode) under $\sigma(\cdot, \cdot)$. Note that \eqref{switch sys} is a special case of \eqref{hybrid sys} with $D = \emptyset$. Hence, the solution of \eqref{switch sys} is given by Definition \ref{Def sol hyb sys}, and does not exhibit discrete jumps, i.e., the solution is continuous. Similarly, FTS of the origin of \eqref{switch sys} is defined per Definition \ref{Switch FTS Def} with $\Pi\subset \textrm{PWC} (\mathbb R_+\times \mathbb R^n,\Sigma)$. We make the following assumption for \eqref{switch sys}. 
\begin{Assumption}\label{assum: dwell time}
The solution of \eqref{switch sys} exists and satisfies Assumption \ref{assum exist sol conc}. In addition, there is a non-zero dwell-time for the FTS mode $F\in \Sigma$, i.e., $|T_{F_k}| = t_{F_k+1}-t_{F_k}\geq t_d$ for all $k\in \mathbb N$, where $t_d>0$ is a positive constant.
\end{Assumption}
Note that in the absence of discrete jumps, Assumption \ref{assum bar T tm} results into $|\bar T_{F_k}| = |T_{F_k}| = t_{F_k+1}-t_{F_k}\geq t_d$ for $k\in \mathbb N$. Hence, Assumption \ref{assum: dwell time} is a special case of Assumption \ref{assum bar T tm}. 
We present the conditions for FTS of the origin of \eqref{switch sys} in terms of multiple Lyapunov functions. Let $\{i^0, i^1, \dots,i^p,\ldots\}$ be the sequence of modes that are active during the intervals $[t_{0}, t_{1}), [t_{1},t_{2}), \dots,[t_p,t_{p+1}),\ldots\,$, respectively, for $i^p\in\Sigma, p\in \mathbb Z_+$. 

\begin{Cor}\label{FT Th 3}
The origin of \eqref{switch sys} is LS if there exist a switching signal $\sigma$ and Lyapunov functions $V_i$ for each $i\in \Sigma$, and the following hold:
\begin{itemize}
   \item[(i)] There exists $\alpha_1 \in \mathcal{GK}$, such that 
\begin{equation}
\begin{split}
   \sum\limits_{k = 0}^{p}
   \Big(V_{i^{k+1}}(x(t_{{k+1}})) -&V_{i^k}(x(t_{{k+1}}))\Big)\leq \alpha_1(\|x_0\|), \label{cond 1}
\end{split}\end{equation}
holds for all $p\in \mathbb Z_+$;
\item[(ii)] There exists $\alpha_2 \in \mathcal{GK}$, such that 
\begin{equation}\begin{split}
   \sum\limits_{k = 0}^{p}
   \Big(V_{i^k}(x(t_{{k+1}})) -&V_{i^k}(x(t_{k}))\Big)\leq \alpha_2(\|x_0\|), \label{cond 2}
\end{split}\end{equation}
holds for all $p\in \mathbb Z_+$.
If in addition,
\item[(iii)] There exist a mode $F\in \Sigma$, constants $c>0$ and $0<\beta<1$ such that the corresponding Lyapunov function $V_F$ satisfies 
\begin{align}\label{v dot cond 2}
 \dot V_F \leq -cV_F^\beta,   
\end{align}
for all $t\in [t_{F_i},t_{F_i+1}), \; i\in \mathbb N$;
\item[(iv)] The mode $F$ is active for a cumulative duration $T_F$ defined as
\begin{align*}
T_F = \gamma (\|x_0\|) \triangleq \frac{\alpha(\|x\|)^{1-\beta}}{c(1-\beta)} + \frac{\bar \alpha(\|x\|)^{1-\beta}}{c(1-\beta)},
\end{align*} 
where $\alpha = \alpha_0+\alpha_1+\alpha_2$, $\bar \alpha = \alpha_1+\alpha_2$ and $\alpha_0 \in \mathcal{GK}$,
\end{itemize}
then the origin of \eqref{switch sys} is FTS with respect to $\sigma$. Moreover, if all the conditions hold globally and the functions $V_i$ are radially unbounded for all $i\in \Sigma$, and the functions $\alpha_1, \alpha_2\in \mathcal{GK}_\infty$, then the origin of \eqref{switch sys} is globally FTS. 
\end{Cor}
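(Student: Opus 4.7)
The plan is to derive Corollary \ref{FT Th 3} directly from Theorem \ref{FT Th 2} by viewing \eqref{switch sys} as an instance of the hybrid system \eqref{hybrid sys} with empty jump set $D = \emptyset$, and checking that each hypothesis of Corollary \ref{FT Th 3} corresponds to a hypothesis of Theorem \ref{FT Th 2} under this specialization.

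First I would observe that with $D = \emptyset$, the discrete-jump set $J_i$ is empty for every $i \in \Sigma_f$, so the inner sum in condition (iii) of Theorem \ref{FT Th 2} is vacuous and the condition holds trivially with $\alpha_3 \equiv 0$. Consequently, the class-$\mathcal{GK}$ function $\alpha$ appearing in condition (v) of Theorem \ref{FT Th 2} reduces to $\alpha_0 + \alpha_1 + \alpha_2$ and the function $\bar\alpha$ reduces to $\alpha_1 + \alpha_2$, matching the definitions in condition (iv) of the corollary. Moreover, since no discrete jumps occur, each interval $\bar T_{F_k}$ of Theorem \ref{FT Th 2} coincides with the activation interval $T_{F_k}$, so that $\bar T_F = T_F$, and Assumption \ref{assum bar T tm} reduces exactly to the dwell-time condition on the FTS mode stated in Assumption \ref{assum: dwell time}.

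Next I would check conditions (i)--(ii) of Corollary \ref{FT Th 3} against conditions (i)--(ii) of Theorem \ref{FT Th 2}: the expressions are identical with $\Sigma$ in place of $\Sigma_f$ (and note that in the absence of jumps, $x(t^+) = x(t)$ so there is no ambiguity at the switching instants $t_{k+1}$). Condition (iii) of the corollary, requiring $\dot V_F \leq -cV_F^\beta$ on each activation interval $[t_{F_i}, t_{F_i+1})$, matches condition (iv) of Theorem \ref{FT Th 2}, since the set $\bigcup [t_{F_k}, t_{F_k+1}) \setminus J_F$ in the theorem equals the union of the full activation intervals when $J_F = \emptyset$. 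Finally, condition (iv) of the corollary prescribes the same cumulative activation time as condition (v) of Theorem \ref{FT Th 2} with the simplifications noted above.

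With all hypotheses of Theorem \ref{FT Th 2} verified, invoking it concludes FTS of the origin of \eqref{switch sys} with respect to $\sigma$. The global statement then follows by the same specialization: if the functions $V_i$ are radially unbounded and $\alpha_1, \alpha_2 \in \mathcal{GK}_\infty$, the global FTS conclusion of Theorem \ref{FT Th 2} applies verbatim. Since the corollary is purely a restriction of an already-proved result, there is no genuine obstacle; the only care required is the bookkeeping to confirm that vacuous hypotheses remain harmless and that Assumption \ref{assum: dwell time} truly specializes Assumption \ref{assum bar T tm}, both of which are straightforward in the absence of discrete jumps.
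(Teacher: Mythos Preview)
Your proposal is correct and matches the paper's own treatment: the paper does not give a separate proof of Corollary~\ref{FT Th 3} but states explicitly that the switched system is the special case $D=\emptyset$ of \eqref{hybrid sys}, so that condition~(iii) of Theorem~\ref{FT Th 2} becomes vacuous and Assumption~\ref{assum: dwell time} is the specialization of Assumption~\ref{assum bar T tm}. Your bookkeeping that $\alpha_3\equiv 0$, $\bar T_{F_k}=T_{F_k}$, and $\bar T_F=T_F$ is exactly what the paper's informal reduction relies on.
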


\subsection{Finite-Time Stabilizing Switching Signal}\label{sec switch signal}
In this subsection, we present a method of designing a switching signal, based upon Corollary \ref{FT Th 3}, so that the origin of the switched system is FTS. The approach is inspired from \cite{zhao2008stability} where a method of designing an asymptotically stabilizing switching signal is presented. Suppose there exist continuous functions $\mu_{ij}:\mathbb R^n\rightarrow\mathbb R$ satisfying:
\begin{equation}\label{mu ij}
    \begin{split}
        \mu_{ij}(0) & = 0, \\
        \mu_{ii}(x) & = 0 \quad \forall\; x,\\
        \mu_{ij}(x) + \mu_{jk}(x) &\leq \min\{0,\mu_{ik}(x)\}, \; \forall x
    \end{split}
\end{equation}
for all $i,j,k \in \Sigma$. Define the following sets:
\begin{equation}\label{omega i ij}
    \begin{split}
        \Omega_i &= \{x\; | \; V_i(x) -V_j(x) + \mu_{ij}(x) \leq 0, j \in \Sigma\}, \\
        \Omega_{ij} & = \{x\; |\; V_i(x) - V_j(x) + \mu_{ij}(x)  = 0, i\neq j\},
    \end{split}
\end{equation}
where $V_i$ is a Lyapunov function for each $i\in \Sigma$. 

Now we are ready to define the switching signal. Let $\sigma(t_0,x(t_0)) = i$ and $i,j\in \Sigma$ be any arbitrary modes. For all times $t\geq t_0$, define the switching signal as:
\begin{equation}\label{switch sigma}
    \begin{split}
        \sigma(t,x) & =  \left\{
              \begin{array}{lr}
                i , & \sigma(t^-,x(t^-)) = i, \;  x(t^-)\in \textrm{int} (\Omega_i); \\
                j , & \sigma(t^-,x(t^-)) = F, \;  x(t^-)\in \Omega_{Fj}, \;  \Delta_t \geq t_d; \\
                j , & \sigma(t^-,x(t^-)) = i, \;  i\neq F , \;  x(t^-)\in  \Omega_{ij}; \\
                F, & \sigma(t^-,x(t^-)) = i , \;  x(t^-)\in \Omega_{iF};
              \end{array}
           \right.  
    \end{split}
\end{equation}
where 
\begin{itemize}
    \item[-]$\Delta_t = t-t_k$ is the time duration from the last switching instant $t_k$;
    \item[-]$t_d>0$ is some positive dwell-time;
\end{itemize} 
\noindent Note that the condition for switching from mode $F$ to mode $j$ includes a dwell-time of $t_d$, so that Assumption \ref{assum: dwell time} is satisfied. We now state the following result.


\begin{Theorem}\label{FT switch law}
Let the switching signal for \eqref{switch sys} is given by \eqref{switch sigma}. Let $V_i$ are Lyapunov functions for $i = 1,2,\dots, N$, and $\mu_{ij}$ satisfy \eqref{mu ij}. Assume that the following hold:
\begin{itemize}
    \item[(i)] There exists continuous functions $\beta_{ij}:\mathbb R^n\rightarrow\mathbb R$ for $i,j\in \Sigma$ such that $\beta_{ij}(x)\leq 0$ for all $x\in \mathbb R^n$ and 
    \begin{align}\label{beta V f cond}
        \frac{\partial V_i}{\partial x}f_i(x) + \sum_{j = 1}^{N_f}\beta_{ij}(x)(V_i(x) -V_j(x) + \mu_{ij}(x))\leq 0,
    \end{align}
    holds for all $i\in \Sigma$, for all $x\in \mathbb R^n$;
    \item[(ii)] There exists a finite-time stable mode $F\in \Sigma$ satisfying condition (iii) and (iv) of Corollary \ref{FT Th 3};
    \item[(iii)] The functions $\mu_{ij}$ are continuously differentiable and satisfy
    \begin{align}\label{mu dot cond}
        \frac{\partial \mu_{ij}}{\partial x}f_i &\leq 0, \; i,j = 1, 2, \dots, N. 
    \end{align}
    \item[(iv)] No sliding mode occurs at any switching surface.
\end{itemize}
Then, the origin of \eqref{switch sys} is FTS.
\end{Theorem}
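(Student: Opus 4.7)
The plan is to verify the hypotheses of Corollary \ref{FT Th 3} under the switching law \eqref{switch sigma} and then invoke that corollary. Conditions (iii) and (iv) of the corollary are supplied directly by assumption (ii) of the theorem (existence of the FTS mode $F$ and the required cumulative activation time), and the minimum dwell time that Assumption \ref{assum: dwell time} requires on $F$ is enforced by the clause $\Delta_t \ge t_d$ appearing in the second line of \eqref{switch sigma}. So the remaining task is to establish conditions (i) and (ii) of the corollary, namely to bound the cumulative switching jumps $\sum_k \bigl(V_{i^{k+1}}(x(t_{k+1})) - V_{i^k}(x(t_{k+1}))\bigr)$ and the cumulative per-mode increments $\sum_k \bigl(V_{i^k}(x(t_{k+1})) - V_{i^k}(x(t_k))\bigr)$ by class-$\mathcal{GK}$ functions of $\|x_0\|$.

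A structural observation I would extract first from \eqref{mu ij} is that $\mu_{ij}(x) \le 0$ for every $i,j\in\Sigma$ and every $x\in\mathbb{R}^n$: setting $i=j$ in the inequality $\mu_{ij}(x) + \mu_{jk}(x) \le \min\{0,\mu_{ik}(x)\}$ and using $\mu_{ii}\equiv 0$ gives $\mu_{ik}(x) \le \min\{0,\mu_{ik}(x)\} \le 0$. This sign property is what makes both required bounds hold trivially.

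Next, I would analyze a single active interval $[t_k,t_{k+1})$ of mode $i^k$. By the first line of \eqref{switch sigma} and assumption (iv) (no sliding), the closed-loop trajectory remains in $\textrm{int}(\Omega_{i^k})$ until it transversally reaches some $\Omega_{i^k j}$; assumption (iii) together with the non-increasing behavior of $V_{i^k}$ established below is what keeps the state inside $\Omega_{i^k}$ until such a well-defined exit occurs. On $\Omega_{i^k}$ one has $V_{i^k}(x) - V_j(x) + \mu_{i^k j}(x) \le 0$ for every $j$, and since $\beta_{i^k j}(x) \le 0$ this yields $\beta_{i^k j}(x)\bigl(V_{i^k}(x) - V_j(x) + \mu_{i^k j}(x)\bigr) \ge 0$. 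Inequality \eqref{beta V f cond} then forces $\dot V_{i^k}(x(t)) \le 0$, so $V_{i^k}(x(t_{k+1})) - V_{i^k}(x(t_k)) \le 0$, making the sum in condition (ii) of Corollary \ref{FT Th 3} non-positive and hence bounded above by any class-$\mathcal{GK}$ function $\alpha_2$. At the switching instant $t_{k+1}$, since $x(t_{k+1}) \in \Omega_{i^k i^{k+1}}$, the definition of $\Omega_{i^k i^{k+1}}$ gives
\begin{align*}
V_{i^{k+1}}(x(t_{k+1})) - V_{i^k}(x(t_{k+1})) \;=\; \mu_{i^k i^{k+1}}(x(t_{k+1})) \;\le\; 0,
\end{align*}
so the sum in condition (i) of Corollary \ref{FT Th 3} is likewise non-positive and is bounded by any $\alpha_1 \in \mathcal{GK}$.

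With all four hypotheses of Corollary \ref{FT Th 3} in hand, finite-time stability of the origin of \eqref{switch sys} follows immediately. The main obstacle I anticipate is the invariance-type step in the third paragraph: rigorously arguing that trajectories cannot leave $\Omega_{i^k}$ except through some surface $\Omega_{i^k j}$, cannot slide on $\partial \Omega_{i^k}$, and that the switching times $t_{k+1}$ are well-defined and do not accumulate within a single active interval. This is precisely what assumptions (iii) and (iv) of the theorem are designed to ensure; once this regularity is granted, the remainder of the argument reduces to the short chain of inequalities outlined above.
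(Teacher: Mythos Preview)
Your proposal is correct and follows essentially the same route as the paper: verify the hypotheses of Corollary~\ref{FT Th 3} (conditions (iii)--(iv) from assumption (ii), the dwell-time on $F$ from the clause $\Delta_t\ge t_d$ in \eqref{switch sigma}, and conditions (i)--(ii) from the structure of the $\Omega_i$, $\Omega_{ij}$ and the sign conditions on $\beta_{ij}$, $\mu_{ij}$), then invoke the corollary. The paper delegates the verification of conditions (i)--(ii) to \cite[Theorem~3.18]{zhao2008stability} and records $\alpha_1(r)=\max_{\|x\|\le r,\,i,j}|\mu_{ij}(x)|$, $\alpha_2\equiv 0$; your self-contained argument, exploiting $\mu_{ij}\le 0$ to show each summand is nonpositive, is the same mechanism and in fact yields the slightly sharper observation that both cumulative sums are $\le 0$.
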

\begin{proof}
We show that all the conditions of Corollary \ref{FT Th 3} and Assumption \ref{assum: dwell time} are satisfied to establish FTS of the origin for \eqref{switch sys}, when the switching signal is defined as per \eqref{switch sigma}. As per the analysis in \cite[Theorem 3.18]{zhao2008stability}, we obtain that the conditions (i)-(ii) of Corollary \ref{FT Th 3} are satisfied with 
\begin{align}
    \alpha_1(r) & = \max_{\|x\|\leq r,\; i, j\in \Sigma_f}|\mu_{ij}(x)|,\\
    \alpha_2 (r) & = 0,
\end{align}
for any $r\geq 0$. From (ii), we obtain that conditions (iii) and (iv) of Corollary \ref{FT Th 3} hold as well. Per \eqref{switch sigma}, Assumption \ref{assum: dwell time} is also satisfied. Thus, all the conditions of the Corollary \ref{FT Th 3} and Assumption \ref{assum: dwell time} are satisfied. Hence, we obtain that the origin of \eqref{switch sys} with switching signal defined as per \eqref{switch sigma} is FTS.
\end{proof}

\begin{Remark}
Note that an arbitrary switching signal $\sigma$ may not satisfy the conditions of Corollary \ref{FT Th 3}, particularly condition (v), where the mode $F$ is required to be active for $T_F(x_0)$ time duration. For any given initial condition $x_0$, the switching signal can be defined as per \eqref{switch sigma} to render the origin of \eqref{switch sys} FTS. Definition \ref{Switch FTS Def} allows us to choose the switching signal $\sigma$ as per \eqref{switch sigma} so that the switched system \eqref{switch sys} satisfies the conditions of Corollary \ref{FT Th 3}. Moreover, one can verify that the only difference between the switching signal defined in \cite{zhao2008stability} and \eqref{switch sigma} is the introduction of dwell-time $t_d$ when switching from mode $F$. This observation re-emphasizes on the fact a system whose origin is uniformly stable can be made FTS by ensuring that the dwell-time condition and the cumulative activation time requirements are satisfied for an FTS mode.

\end{Remark}

\textbf{A note on construction of functions $\mu_{ij},V_i$}: For a class of switched systems consisting of $N-1$ linear modes and one FTS mode $F$, one can follow a design procedure similar to \cite[Remark 3.21]{zhao2008stability} to construct the functions $\mu_{ij}$, as well as the Lyapunov functions $V_i$, for all $i\neq F$. The design procedure includes choosing quadratic functions $\mu_{ij} = x^TP_{ij}x$ and $V_i = x^TR_ix$ with $R_i$ as positive definite matrices, and using the conditions \eqref{mu ij} and \eqref{mu dot cond} along with the conditions of Corollary \ref{FT Th 3}, to formulate a linear matrix inequality (LMI) based optimization problem. For system consisting of polynomial dynamics $f_i$, one can formulate a sum-of-square (SOS) problem to find polynomial functions $V_i, \mu_{ij}$ and $\beta_{ij}$ by posing \eqref{mu ij}, \eqref{beta V f cond} and \eqref{mu dot cond} inequalities as SOS constraints (see \cite{parrilo2000structured} for an overview of SOS programming and \cite{prajna2002introducing} for methods of solving SOS problems). The ``min-switching" law as described in \cite{liberzon1999basic}, can be defined by setting the functions $\mu_{ij} = 0$, which would imply that the Lyapunov functions should be non-increasing at the switching instants. Our conditions on the lines of the generalization of min-switching law, as presented in \cite{zhao2008stability}, overcome this limitation and allow the Lyapunov functions to increase at the switching instants. 

\subsection{FTS output-feedback for Switched Systems}\label{sec: linear switched FT}

In this subsection, we consider a switched linear system with $N$ modes such that only one mode is observable and controllable, and design an output-feedback to stabilize the system trajectories at the origin in a finite time. Consider the system:
\begin{equation}\label{obs switch sys dyn}
    \begin{split}
        \dot x &= A_{\sigma(t,x)} x + B_{\sigma(t,x)} u, \\
        y &= C_{\sigma(t,x)} x,
    \end{split}
\end{equation}
where $x\in \mathbb R^n, u\in \mathbb R, y\in \mathbb R$ are the system states, and input and output of the system, respectively, with $A_i\in \mathbb R^{n\times n}, B_i\in \mathbb R^{n\times 1}$ and $C_i\in \mathbb R^{1\times n}$. The switching signal $\sigma:\mathbb R_+\times\mathbb R^n\rightarrow\Sigma \triangleq \{1, 2, \dots, N\}$ is a piecewise constant, right-continuous function. We make the following assumption:

\begin{Assumption}
There exists a mode $\sigma_0\in \Sigma$ such that $(A_{\sigma_0}, B_{\sigma_0})$ is controllable and $(A_{\sigma_0}, C_{\sigma_0})$ is observable.
\end{Assumption}

Without loss of generality, one can assume that the pair $(A_{\sigma_0}, C_{\sigma_0})$ is in the controllable canonical form and $(A_{\sigma_0}, C_{\sigma_0})$ is in the observable canonical form, i.e., $A_{\sigma_0} = \begin{bmatrix}0_n & \begin{bmatrix}I_{n-1}\\0_{n-1}^T\end{bmatrix} \end{bmatrix}$, $ B_{\sigma_0} = \begin{bmatrix} 0 & 0 & 0 & \cdots & 0 & 1\end{bmatrix}^T$ and $ C_{\sigma_0} = \begin{bmatrix} 1 & 0 & 0 & \cdots & 0 & 0\end{bmatrix}$, where $I_{n-1}\in \mathbb R^{n-1\times n-1}$ is an identity matrix and $0_{k} = \begin{bmatrix} 0 & 0 & \cdots & 0 & 0\end{bmatrix}^T\in \mathbb R^{k\times 1}$. 

The objective is to design an output feedback for \eqref{obs switch sys dyn} so that the closed loop trajectories $x(\cdot)$ reach the origin in a finite time. To this end, we first design an FTS observer, and use the estimated states $\hat x$ to design the control input $u$. The form of the observer is:
\begin{equation}
    \begin{split}
        \dot {\hat x} = A_\sigma \hat x + g_\sigma(C_\sigma x-C_\sigma \hat x) + B_\sigma u.
    \end{split}
\end{equation}
Following \cite[Theorem 10]{perruquetti2008finite}, we define the function $g:\mathbb R\rightarrow\mathbb R^n$ as:
\begin{align}
    g_i(y) = l_i\sign(y)|y|^{\alpha_i}, i = 1, 2, \dots, n,
\end{align}
where $l_i$ are such that the matrix $\bar A$ defined as $\bar A = \begin{bmatrix}-\bar l & \begin{bmatrix}I_{n-1}\\0_{n-1}\end{bmatrix} \end{bmatrix} $
where $\bar l = \begin{bmatrix} l_1 & l_2 & \cdots & l_n\end{bmatrix}^T$ is Hurwitz, and the exponents $\alpha_i$ are chosen as $\alpha_i = i\alpha-(i-1)$ for $1<i\leq n$, where $1-\frac{n-1}{n}<\alpha<1$. Define the function $g_\sigma$ as:
\begin{equation}\label{g def}
    g_\sigma(y) = \left\{
              \begin{array}{cc}
                g(y), & \sigma(t) = \sigma_0;\\
                0, & \sigma(t) \neq \sigma_0;
              \end{array}
           \right.  
\end{equation}
Let the observation error be $e = x-\hat x$, with $e_i = x_i - \hat x_i$ for $i = 1, 2, \dots, N$. Its time derivative reads:
\begin{equation}\label{error obs dyn}
    \begin{split}
        \dot e = A_\sigma e-g_\sigma (C_\sigma e).
    \end{split}
\end{equation}

Next, we design a feedback $u = u(\hat x)$ so that the origin is FTS for the closed-loop trajectories of \eqref{obs switch sys dyn}. Inspired from control input defined in \cite[Proposition 8.1]{bhat2005geometric}, we define the control input as
\begin{align}\label{FTS switch control}
    u(\hat x) =  \left\{
              \begin{array}{cc}
             -\sum_{i = 1}^nk_i\sign(\hat x_i)|\hat x_i|^{\beta_i}, & \sigma(t) = \sigma_0;\\
                0, & \sigma(t) \neq \sigma_c;
              \end{array}
           \right.  ,
\end{align}
where $\beta_{j-1} = \frac{\beta_j\beta_{j+1}}{2\beta_{j+1}-\beta_j}$ with $\beta_{n+1} = 1$ and $0<\beta_n = \beta<1$, and $k_i$ are such that the polynomial $s^n+k_ns^{n-1}+\cdots +k_2s+k_1$ is Hurwitz. We now state the following result.

\begin{Theorem}
Let the switching signal $\sigma$ for \eqref{obs switch sys dyn} be given by \eqref{switch sigma} with $F = \sigma_0$. Assume that there exist functions $\mu_{ij}$ as defined in \eqref{mu ij}, and that the conditions (i)-(iii) of Theorem \ref{FT switch law} are satisfied. Then, the origin of the closed-loop system \eqref{obs switch sys dyn} under the effect of control input \eqref{FTS switch control} is an FTS equilibrium.
\end{Theorem}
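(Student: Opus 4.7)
The plan is to treat the combined state-plus-estimator dynamics as a single switched system on the extended state $\xi = (x^T, e^T)^T \in \mathbb{R}^{2n}$, and to show that $F = \sigma_0$ is a finite-time stable mode for this extended system, so that Theorem \ref{FT switch law} can be applied. In each mode $\sigma$, the extended dynamics read $\dot x = A_\sigma x + B_\sigma u(\hat x)$ and $\dot e = A_\sigma e - g_\sigma(C_\sigma e)$. The key observation is that when $\sigma = \sigma_0$, the error equation decouples completely from $x$, and by \cite[Theorem 10]{perruquetti2008finite} together with the construction of $g$ and the exponents $\alpha_i$, the origin $e = 0$ is globally FTS for $\dot e = A_{\sigma_0}e - g(C_{\sigma_0}e)$. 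Hence there exists $T_e(\|e_0\|) < \infty$ (a class-$\mathcal{GK}$ function of $\|e_0\|$) such that whenever mode $\sigma_0$ is active for a cumulative time of $T_e$, the error becomes identically zero.

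Once $e \equiv 0$, the control $u(\hat x)$ coincides with the state-feedback $u(x) = -\sum_i k_i \sign(x_i)|x_i|^{\beta_i}$, and by \cite[Proposition 8.1]{bhat2005geometric} the closed-loop $\dot x = A_{\sigma_0} x + B_{\sigma_0} u(x)$ is FTS with settling-time function $T_x(\|x_0\|)$ of class $\mathcal{GK}$. Cascading the two finite-time phases yields that mode $\sigma_0$ drives $\xi$ to the origin in cumulative time at most $T_e(\|e_0\|) + T_x(\|x(T_e)\|)$, i.e.\ mode $\sigma_0$ is FTS for the extended system. The existence of a (not necessarily smooth) Lyapunov function $V_{\sigma_0}(\xi)$ satisfying $\dot V_{\sigma_0} \leq -cV_{\sigma_0}^\beta$ for some $c>0$, $\beta \in (0,1)$ then follows from the converse Lyapunov theorem for FTS in \cite{bhat2000finite}, which supplies condition (iii) of Corollary \ref{FT Th 3} and hence condition (ii) of Theorem \ref{FT switch law} with $F = \sigma_0$.

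For the remaining modes $\sigma \neq \sigma_0$, we use on the extended state a quadratic Lyapunov function $V_\sigma(\xi) = \xi^T R_\sigma \xi$ with $R_\sigma \succ 0$, and choose the functions $\mu_{ij}$, $\beta_{ij}$ as in the LMI/SOS construction described after Theorem \ref{FT switch law}, so that hypotheses (i) and (iii) of Theorem \ref{FT switch law} (i.e.\ \eqref{beta V f cond} and \eqref{mu dot cond}) hold; hypothesis (iv) on the absence of sliding modes is a standing assumption. The switching signal \eqref{switch sigma} with $F = \sigma_0$ then, by the same bookkeeping as in the proof of Theorem \ref{FT switch law}, forces mode $\sigma_0$ to be active with dwell-time $t_d$ each time, and keeps it active long enough in cumulative time to exceed the bound $T_F(\|\xi_0\|) = \gamma(\|\xi_0\|)$ required by condition (iv) of Corollary \ref{FT Th 3}; since $\xi_0 = (x_0^T, e_0^T)^T$ is bounded for every bounded initial condition, this cumulative activation is achievable. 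All hypotheses of Theorem \ref{FT switch law} are therefore satisfied for the extended switched system, so $\xi \to 0$ in finite time, which in particular gives $x(t) \to 0$ in finite time and establishes FTS of the origin of \eqref{obs switch sys dyn}.

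The main obstacle is the cascade argument in mode $\sigma_0$: the $x$-equation is driven by $u(\hat x) = u(x-e)$, so during the observer transient the state $x$ may grow, and one must check that it does not escape in the finite time $T_e$. This can be handled by noting that the triangular structure of $A_{\sigma_0}$ with the homogeneous feedback gives, before the error vanishes, at worst polynomial growth of $x$ on a bounded interval, so $\|x(T_e)\|$ is bounded by a continuous function of $(\|x_0\|,\|e_0\|)$; after $t = T_e$ the second FTS phase takes over with a finite additional settling time. This is what justifies treating mode $\sigma_0$ as a single FTS mode for the extended system, and it is the step I would write out in full detail.
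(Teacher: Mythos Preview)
Your approach is correct in outline but takes a genuinely different route from the paper. The paper does \emph{not} work on the extended state $\xi=(x,e)$; instead it applies the switched-FTS framework \emph{twice, sequentially}. First it treats the error dynamics $\dot e = A_\sigma e - g_\sigma(C_\sigma e)$ as a standalone switched system: since this decouples from $x$, the paper invokes \cite[Theorem 10]{perruquetti2008finite} for FTS of mode $\sigma_0$ and then \cite[Theorem 7.2]{bhat2005geometric} (homogeneity with negative degree) to \emph{directly} produce a Lyapunov function $V_o$ with $\dot V_o \le -cV_o^\beta$. Theorem~\ref{FT switch law} then gives $e(t)=0$ for all $t\ge T_1$. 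Second, for $t\ge T_1$ the control becomes exact state feedback $u(x)$, so the paper repeats the same argument on the switched $x$-dynamics, using \cite[Proposition 8.1]{bhat2005geometric} for FTS of mode $\sigma_0$, to get $x(t)=0$ for $t\ge T_1+T_2$.

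What this buys the paper: each of the two pieces is genuinely homogeneous, so \cite[Theorem 7.2]{bhat2005geometric} supplies the Lyapunov inequality for condition (iii) of Corollary~\ref{FT Th 3} without any converse theorem. Your single-shot approach on $\xi$ is conceptually cleaner, but the extended $\sigma_0$-mode is \emph{not} homogeneous (the $x$- and $e$-parts carry different dilation weights, and $u(x-e)$ mixes them), so you are forced to appeal to the converse FTS Lyapunov theorem in \cite{bhat2000finite}. That theorem requires continuity of the settling-time function at the origin; for your cascade this amounts to showing (a) no finite-time escape of $x$ during the observer transient and (b) continuous dependence of $x(T_e)$ on $(x_0,e_0)$ --- exactly the step you flag as the main obstacle. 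So your route works, but the extra machinery it needs (cascade boundedness plus converse theorem) is precisely what the paper's sequential, homogeneity-based argument avoids.
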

\begin{proof}
We first show that there exists $T_1<\infty$ such that for all $t\geq T_1$, $\hat x(t) = x(t)$. Note that the origin is the only equilibrium of \eqref{error obs dyn}. From the analysis in Theorem \ref{FT switch law}, we know that the conditions (i) and (ii) of Corollary \ref{FT Th 3} are satisfied. 
The observation-error dynamics for mode $\sigma_0$ reads:
\begin{align}\label{obs err mode o}  
    \dot e 
    & = \begin{bmatrix} e_2- l_1\textrm{sign}(e_1)|e_1|^{\alpha_1}\\ e_3- l_2\textrm{sign}(e_1)|e_1|^{\alpha_2} \\\vdots \\
    e_n-l_{n-1} \textrm{sign}(e_1)|e_1|^{\alpha_{n-1}}\\- l_n\textrm{sign}(e_1)|e_1|^{\alpha_n} \end{bmatrix}. 
\end{align}
Now, using \cite[Theorem 10]{perruquetti2008finite}, we obtain that the origin is an FTS equilibrium for \eqref{obs err mode o}, i.e., for mode $\sigma_0$ of \eqref{error obs dyn}. From \cite[Lemma 8]{perruquetti2008finite}, we also know that \eqref{obs err mode o} is homogeneous with degree of homogeneity $d = \alpha-1<0$. Hence, using \cite[Theorem 7.2]{bhat2005geometric}, we obtain that there exists a Lyapunov function $V_o$ satisfying $\dot V_o \leq -cV_o^\beta$ where $c>0$ and $0<\beta<1$. Hence, condition (iii) of Corollary \ref{FT Th 3} is also satisfied. From the proof of Theorem \ref{FT switch law}, we obtain that the condition (iv) of Corollary \ref{FT Th 3} and Assumption \ref{assum: dwell time} are also satisfied. Hence, we obtain that the origin of \eqref{error obs dyn} is an FTS equilibrium. Thus, there exists $T_1<\infty$ such that for all $t\geq T$, $\hat x(t) = x(t)$. So, for $t\geq T_1$, the control input satisfies $u = u(\hat x) = u(x)$. Again, it is easy to verify that the origin is the only equilibrium for \eqref{obs switch sys dyn} under the effect of control input \eqref{FTS switch control}. 
The closed-loop trajectories take the following form for the mode $\sigma = \sigma_0$
\begin{align}
    \dot x 
    & = \begin{bmatrix} x_2 \\ x_3\\\vdots \\x_{n-1}\\
    x_n-\sum_{i = 1}^n k_{i} \textrm{sign}(x_i)|x_i|^{\beta_i} \end{bmatrix}. 
\end{align}
From \cite[Proposition 8.1]{bhat2005geometric}, we know that the origin of the closed-loop trajectories for mode $\sigma = \sigma_0$ is FTS. Hence, repeating same set of arguments as above, we obtain that there exists $T_2<\infty$ such that for all $t\geq T_1+T_2$, the closed-loop trajectories of \eqref{obs switch sys dyn} satisfy $x(t) = 0$. 
\end{proof}

We presented a way of designing switching signal $\sigma$ and control input $u$ for a class of switched linear system where only of the modes is controllable and observable. 
 
\section{Simulations}\label{Simulations}
We present two numerical examples to demonstrate the efficacy of the proposed methods. The first example considers an instance of the hybrid system \eqref{hybrid sys} with five modes, where one mode is FTS, one is AS, and three are unstable. We demonstrate that if the conditions of Theorem \ref{FT Th 2} are satisfied, then the trajectories of the considered system reach the origin in finite time even in the presence of unstable modes. The second example considers a switched linear control system with five modes such that only one mode is both controllable and observable. We design an FTS output controller for the considered switched system, and demonstrate that the closed-loop trajectories reach the origin despite presence of unobservable modes, and that some of the uncontrollable modes are unstable.

Note that the simulation results have been obtained by discretizing the continuous-time dynamics using Euler discretization. We use a step size of $dt = 10^{-3}$, and run the simulations till the norm of the states drops below $10^{-10}$. At this point we wish to emphasize that while the theoretical results hold for the continuous-time dynamics, and not for the implemented discretized dynamics, still the simulations reflect stable behavior that meets the theoretical bounds on the sufficiently long active time of the finite-time stable mode. In other words, we include the simulations for the sake of visualizing the theoretical results despite the discrepancy between continuous and discretized dynamics. The study of discretization methods for finite-time stable systems is left open for future investigation.  

\subsection{Example 1: Analysis of a finite-time-stable hybrid system}
We present a numerical example to illustrate the FTS results on a hybrid system given as
\begin{equation}\label{hyb sys example}
\begin{split}
    & \mathcal H = \{\mathcal F, \mathcal G , C, D\},\quad \mathcal F = \{f_1, f_2, f_3, f_4, f_5\}, \quad \mathcal G = g_1, \\
    &f_1 = \begin{bmatrix}0.01x_1^2+x_2\\-0.01x_1^3+x_2\end{bmatrix}, \quad f_2 = \begin{bmatrix}0.01x_1-x_2\\-x_1^2+0.01x_2\end{bmatrix},\\
    & f_3 = \begin{bmatrix}-x_1-x_2\\x_1-x_2\end{bmatrix},\quad 
    f_4 = \begin{bmatrix}0.01x_1^2+0.01x_1x_2\\-0.01x_1^3+x_2^2\end{bmatrix},\\
    & f_5 = \begin{bmatrix} x_2-20\textrm{sign}(x_1)|x_1|^\alpha\\-10\textrm{sign}(x_1)|x_1|^{2-2\alpha}\end{bmatrix},\\
    &g_1 = \begin{bmatrix}-1.1x_1 \\-1.1x_2\end{bmatrix},\quad C = \mathbb R^2,\quad D = \mathbb R^2,
\end{split}
\end{equation}
with $\alpha = 0.98$, where the fifth mode is FTS, and thus $F = 5$. Note that the states $x_1$ and $x_2$ change sign and increase in magnitude at the discrete jumps. The Lyapunov functions are defined as $V_i(x) = x^TP_ix$, for $i\in\{1,2,3,4\}$, with $P_1 = \begin{bmatrix}1 & 0\\ 0 & 1\end{bmatrix},  P_2 = \begin{bmatrix}5 & 2\\ 2 & 4\end{bmatrix}, P_3 = \begin{bmatrix}1 & 0\\ 0 & 3\end{bmatrix}, P_4 = \begin{bmatrix}6 & 1\\ 1 & 3\end{bmatrix}$, and $V_5(x) = \frac{k_2}{2\alpha}|x_1|^{2\alpha} + \frac{1}{2}|x_2|^2$. Note that this example is more general than the examples considered in \cite{li2019finite}, as we allow the dynamics to have unstable modes. In this example, the switches in the continuous flows occur after $0.2$ sec, i.e., $|T_{i_k}| = 0.2$ sec, and discrete jumps occur after $0.1$ sec, so that $t_d = 0.1$ sec, i.e., $|\bar T_{i_k}| = 0.1$, for all $i\in \{1, 2,\dots, 5\}$, $k\in \mathbb Z_+$ (see Assumption \ref{assum bar T tm}). 

\begin{figure}[ht!]
	\centering
	\includegraphics[width=0.9\columnwidth,clip]{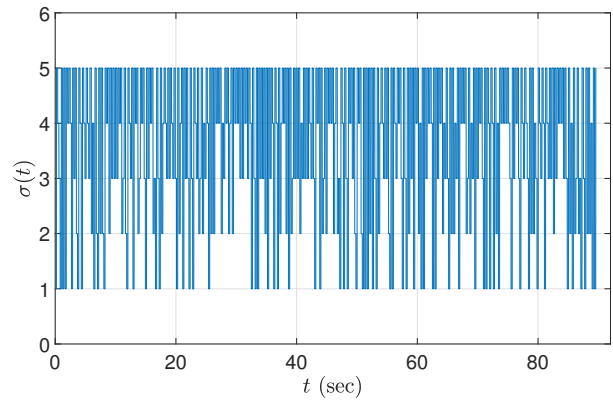}
	\caption{Switching signal $\sigma_f(t)$ for the considered hybrid system \eqref{hyb sys example}.}
	\label{fig:sim hyb mode}
\end{figure}

Figure \ref{fig:sim hyb mode} depicts the considered switching signal $\sigma_f(t)$. The switching signal is designed per Section \ref{sec switch signal} so that conditions (i) and (ii) are met; the switching signal in this example is designed using this method, but the details are omitted in the interest of space. Briefly, the Lyapunov candidates $V_i$ satisfy conditions (i) and (iii) of Theorem \ref{FT Th 2} since they are quadratic. Modes 1, 3 and 5, being stable, satisfy condition (ii) with $\alpha_2 = 0$, and modes 2 and 4, being active for a finite interval each time, satisfy condition (ii) with $\alpha_2 = k\|x_0\|^2$ for some $k>0$. It can be verified that $f_5$ is homogeneous with degree of homogeneity $d=\alpha-1<0$. Thus, using \cite[Theorem 7.2]{bhat2005geometric}, the origin is FTS under the system dynamics $f_5$, and there exists a $V_5$ satisfying \eqref{v dot cond}; therefore, condition (iv) is satisfied. Finally, the switching signal is designed so that mode 5 is active for a sufficient amount of time that satisfies condition (v).

\begin{figure}[ht!]
	\centering
	\includegraphics[width=1\columnwidth,clip]{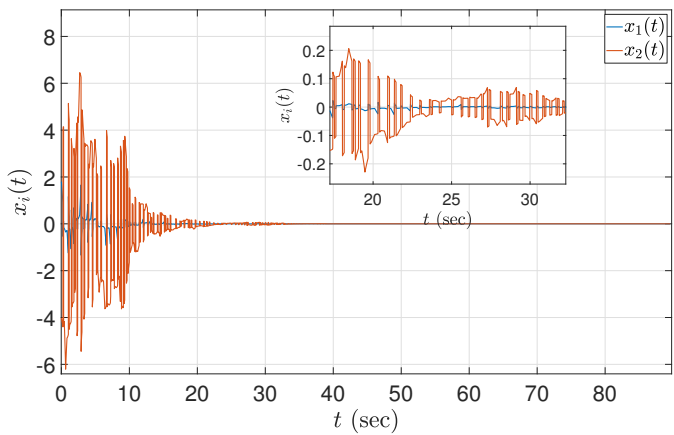}
	\caption{The evolution of $x_1(t)$ and $x_2(t)$ for hybrid system \eqref{hyb sys example}. The states can be seen switching signs during discrete jumps.}
	\label{fig:sim hyb states}
\end{figure}

Figure \ref{fig:sim hyb states} illustrates the state trajectories $x_1(t)$ and $x_2(t)$. Note that the states change sign at the discrete jumps. Figure \ref{fig:sim hyb norm} depicts the norm of the state vector $x(t)$ on log scale; note that $\|x(t)\|$ is increasing while operating in unstable modes, and decreasing while operating in stable modes. As seen in the figures, the system states, starting from $\|x(0)\| = 10$, reach to a norm of $\|x(t)\|\leq 10^{-10}$ within first 90 seconds of the simulation. Finally, Figure \ref{fig:sim hyb V} illustrates the evolution of the Lyapunov functions $V_i$ with respect to time; note that the Lyapunov functions increase, as expected, at the times of the switches in $\sigma_f$ and $\sigma_g$, as well as during the continuous flows along the unstable modes 1, 2 and 4. The provided example demonstrates that the origin of the system is FTS even when one or more modes are unstable, if the FTS mode is active for a sufficient amount of time. 

\begin{figure}[ht!]
	\centering
	\includegraphics[width=1\columnwidth,clip]{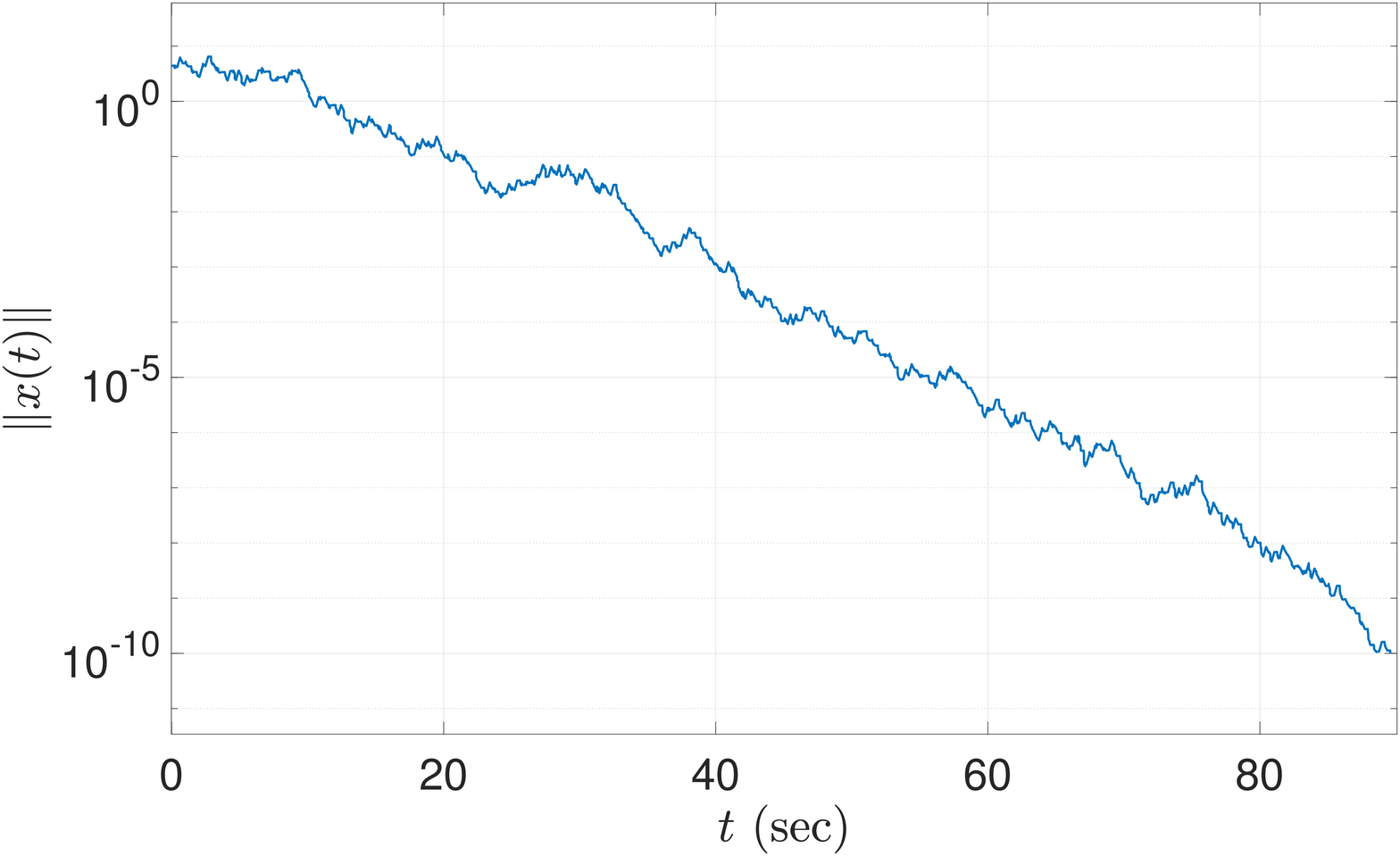}
	\caption{The evolution of $\|x(t)\|$ for \eqref{hyb sys example}. The norm of the states reach a small neighborhood of the origin within a finite time.}
	\label{fig:sim hyb norm}
\end{figure}

\begin{figure}[ht!]
	\centering
	\includegraphics[width=1\columnwidth,clip]{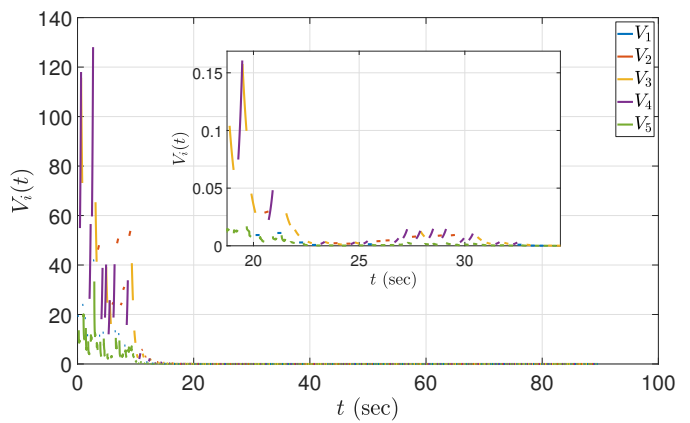}
	\caption{The evolution of the Lyapunov functions $V_i(t)$ for $t\in [0, 10]$ sec for \eqref{hyb sys example}. The Lyapunov functions for unstable modes (mode 2 and 4) increase when the respective modes are active.}
	\label{fig:sim hyb V}
\end{figure}

\subsection{Example 2: FTS output-feedback for switched linear system}
In this second example, we consider linear switched system of the form \eqref{obs switch sys dyn} and design an output feedback that stabilizes the origin for the closed-loop system in a finite time.  
For illustration purposes, we consider a system of order $n = 2$, $\sigma \in\{1,2,3,4,5\}$, and assume that mode $\sigma = 5 \triangleq \sigma_0$ is controllable and observable, i.e., that the pair $(A_{\sigma_0}, B_{\sigma_0})$ is controllable and $(A_{\sigma_0}, C_{\sigma_0})$ is observable, while other modes are either uncontrollable or unobservable, or both. The simulation parameters are:
\begin{itemize}
    \item Number of modes $N = 5$, FTS mode $F = 5$, $t_d = 0.1$, $\alpha = 0.9$, $a_1 =-10$,  $a_2 = 10$, $\beta = .9$ $k_1 = 20$ and $k_2 = 10$;
    \item The matrices $A_i, B_i, C_i$ are chosen as $A_1 = \begin{bmatrix}0 & 1\\ -1 & 0\end{bmatrix}, A_2 = \begin{bmatrix}0.1 & 0\\ 0 & 0.1\end{bmatrix}, A_3 = \begin{bmatrix}-1 & 0\\ 0 & -1.2\end{bmatrix}, A_4 = \begin{bmatrix}1 & 0.1\\ 0.1 & 2\end{bmatrix}, A_5 = \begin{bmatrix}0 & 1\\ 0 & 0\end{bmatrix}$, $B_1 =B_2 = B_3 =  B_4 = \begin{bmatrix}0\\ 0\end{bmatrix},B_5 = \begin{bmatrix}0\\ 1\end{bmatrix}$ and $C_1 = C_2 = C_3 = C_4 = \begin{bmatrix}0 & 0\end{bmatrix}, C_5 = \begin{bmatrix}1 & 0\end{bmatrix}$.
    \item Generalized Lyapunov functions are chosen as $V_i(x) = x^TP_ix$ where matrices $P_i$ are chosen as $P_1 = \begin{bmatrix}1 & 0\\ 0 & 1\end{bmatrix},  P_2 = \begin{bmatrix}5 & 2\\ 2 & 4\end{bmatrix},
        P_3 = \begin{bmatrix}1 & 0\\ 0 & 3\end{bmatrix}, P_4 = \begin{bmatrix}6 & 1\\ 2 & 3\end{bmatrix}$,
    and $V_5(x) = \frac{k_2}{2\alpha}|x_1|^{2\alpha} + \frac{1}{2}|x_2|^2$;
    \item Functions $\mu_{ij}$ as
    \begin{align*}
    \mu_{ij}(x) = \left\{
              \begin{array}{cc}
                -\|x\|^2, & i\in\{1,2,4\};\\
                0, & i\in\{3,5\};
              \end{array}
          \right.  
    \end{align*}
    for all $j\in \sigma$. 
\end{itemize}

Note that open-loop mode 1 is Lyapunov stable, mode 3 is asymptotically stable, and modes 2, 4 and 5 are unstable. The generalized Lyapunov candidates $V_i$, being quadratic, satisfy condition (i) of Corollary \ref{FT Th 3}. Modes 1, 3 and 5, being stable, satisfy condition (ii) with $\alpha_2 = 0$, and modes 2 and 4, being active only for a finite time, satisfy condition (ii) with $\alpha_2 = k\|x_0\|^2$ for some $k>0$. Conditions (iv) and (v) are satisfied by carefully designing the switching signal, as discussed in Section \ref{sec switch signal}.

\begin{figure}[ht!]
	\centering
	\includegraphics[width=1\columnwidth,clip]{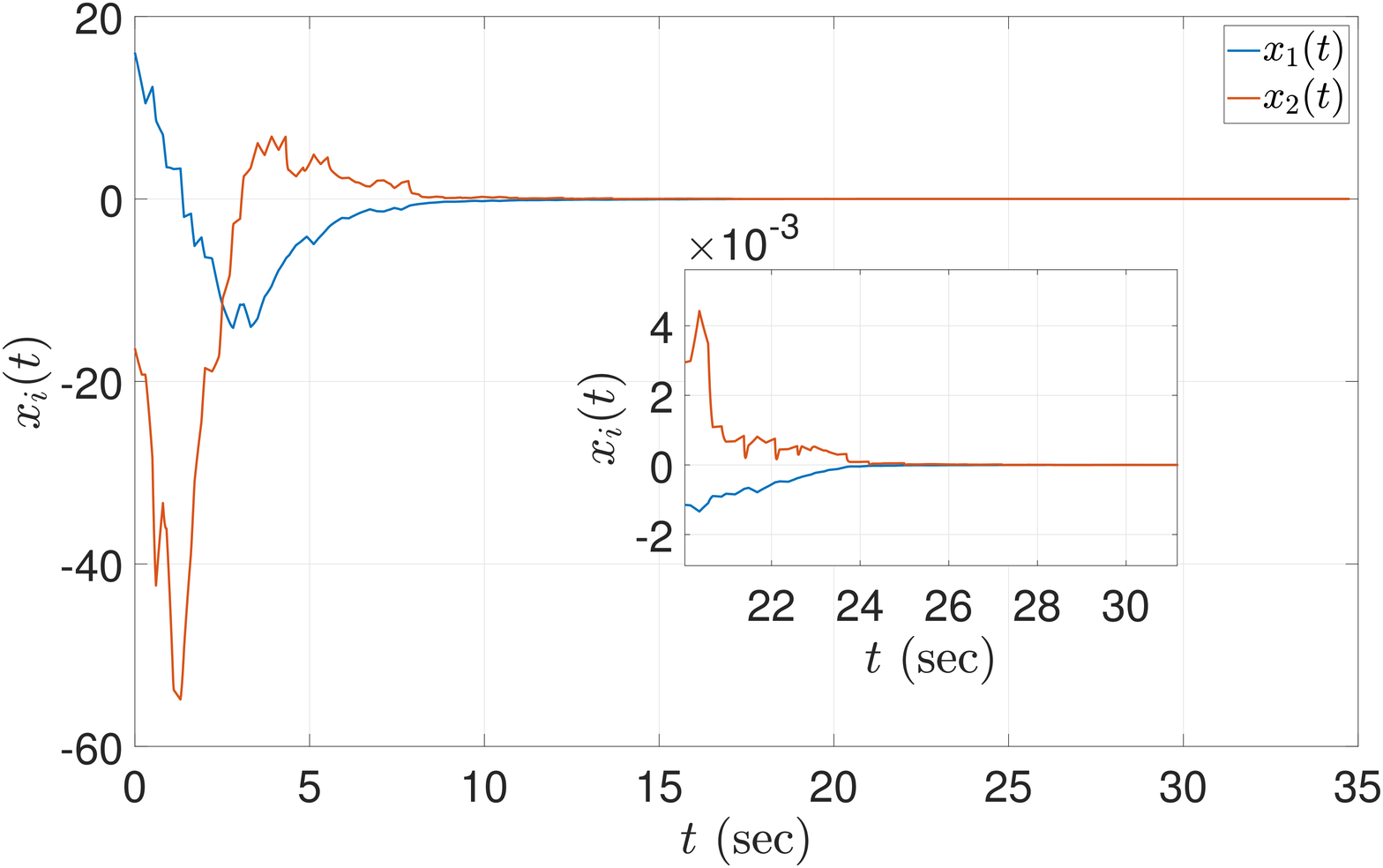}
	\caption{Closed-loop system states $x_1(t),x_2(t)$ with time for linear switched system.}
	\label{fig:sim switch x traj}
\end{figure}

Figure \ref{fig:sim switch x traj} illustrates the state trajectories $x_1(t), x_2(t)$ of the closed-loop system over time for randomly chosen initial conditions, and Figure \ref{fig:sim switch x norm} depicts the norm of the states $\|x(t)\|$. Figure \ref{fig:sim switch e norm} plots the norm of the state-estimation error, $\|x-\hat x\|$ with time. It can be seen from the these figures that both the norms $\|x\|$ and $\|x-\hat x\|$ go to zero in finite time.


\begin{figure}[ht!]
	\centering
	\includegraphics[width=1\columnwidth,clip]{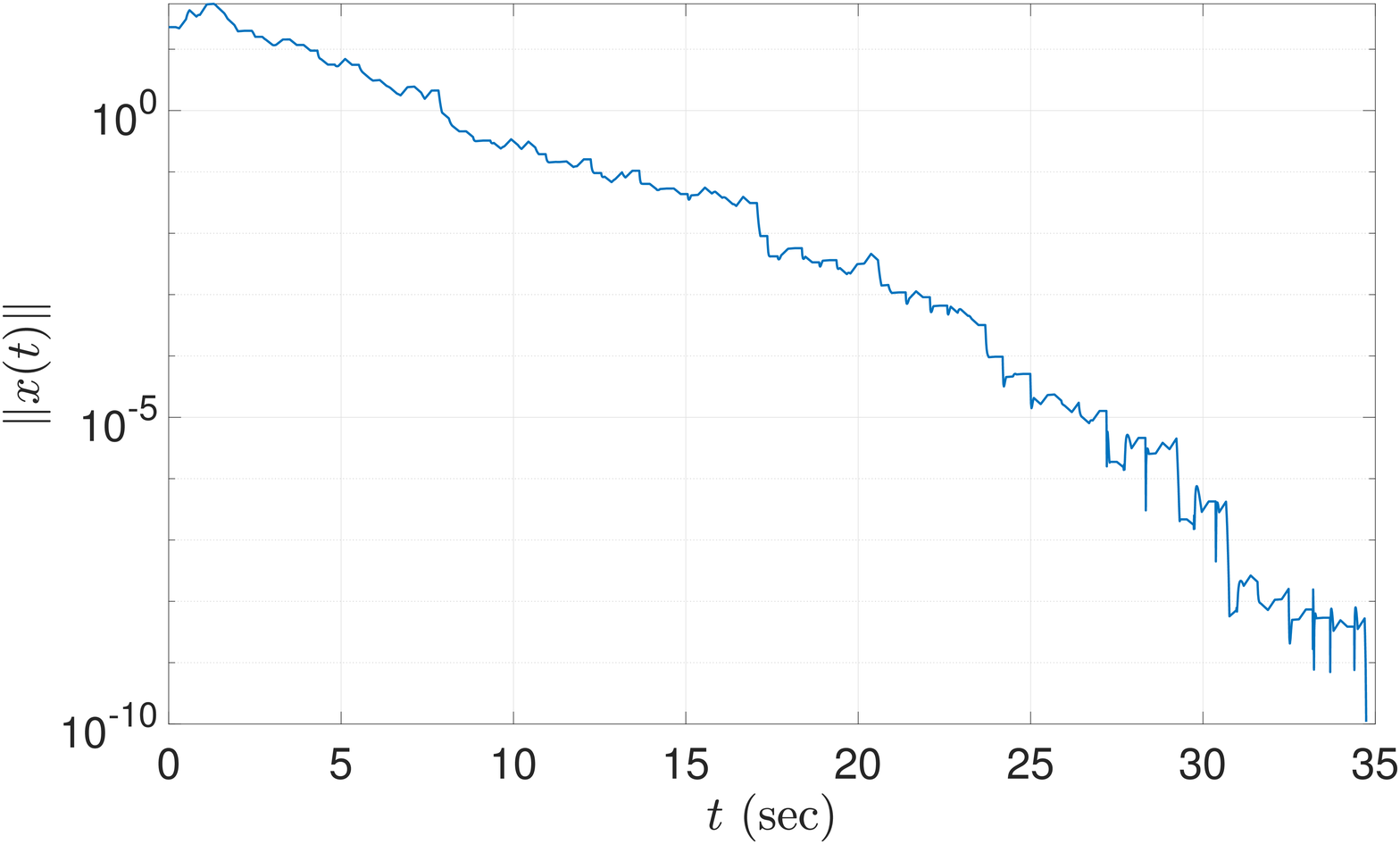}
	\caption{The norm of the state vector $x(t)$ for the closed-loop trajectories of linear switched system with time.}
	\label{fig:sim switch x norm}
\end{figure}

\begin{figure}[ht!]
	\centering
	\includegraphics[width=1\columnwidth,clip]{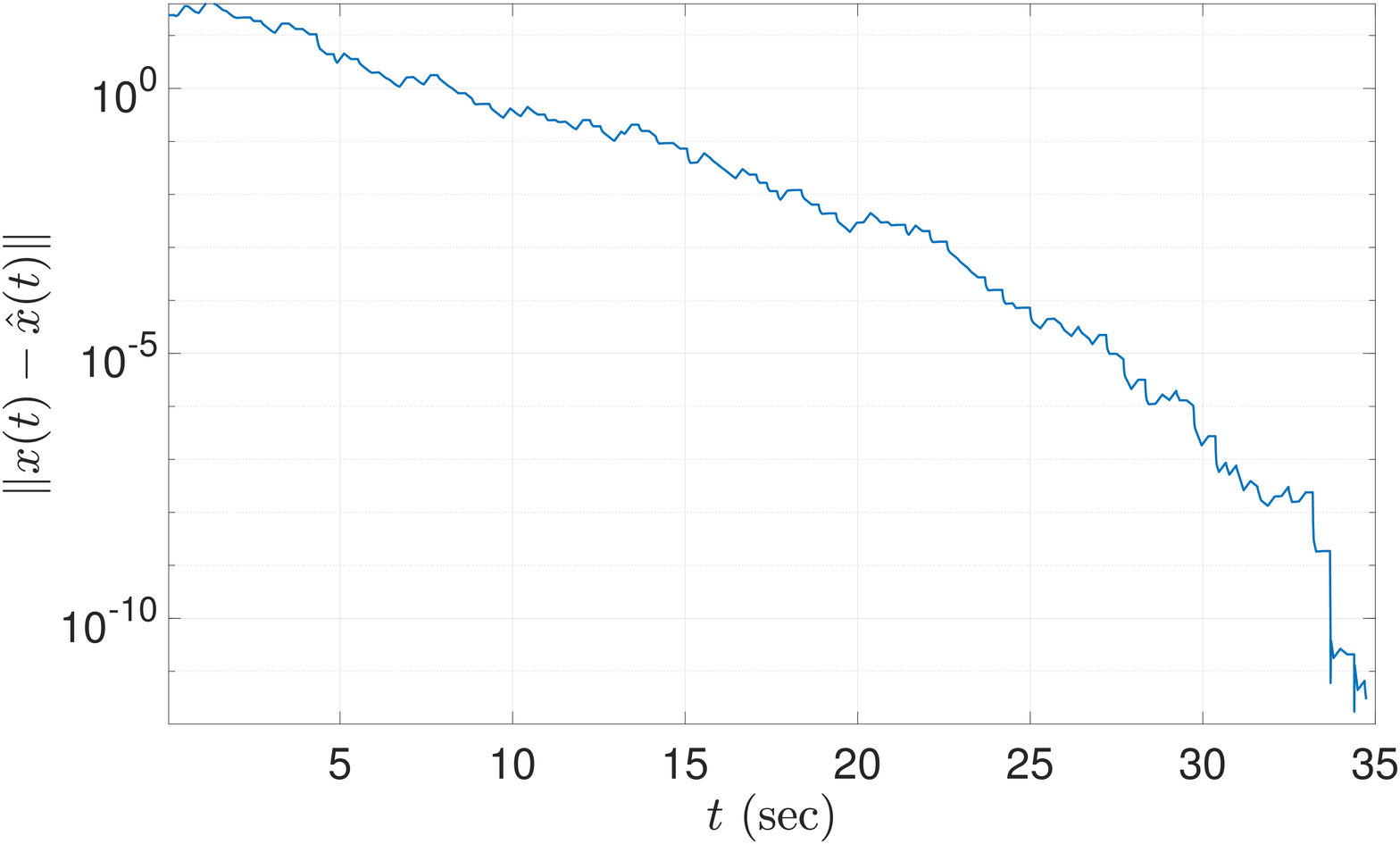}
	\caption{The norm of the state-estimation error $x(t)-\hat x(t)$ for the linear switched system with time.}
	\label{fig:sim switch e norm}
\end{figure}

Figure \ref{fig:sim switch V} shows the evolution of Lyapunov functions $V_i(x-\hat x)$ for the FTS observer of the linear switched system. It can be seen that there are unstable modes in the observer, where the value of the functions increase when the respective modes are active (e.g., mode 2 and 4). 
Finally, Figure \ref{fig:sim switch ss} plots the switching signal $\sigma$ with time. The switching signal is designed as per the design procedure listed in Section \ref{sec switch signal}. It can be seen that all the five modes (including the unstable modes) get activated for the switched linear system, while FTS of the origin is still ensured.

\begin{figure}[ht!]
	\centering
	\includegraphics[width=1\columnwidth,clip]{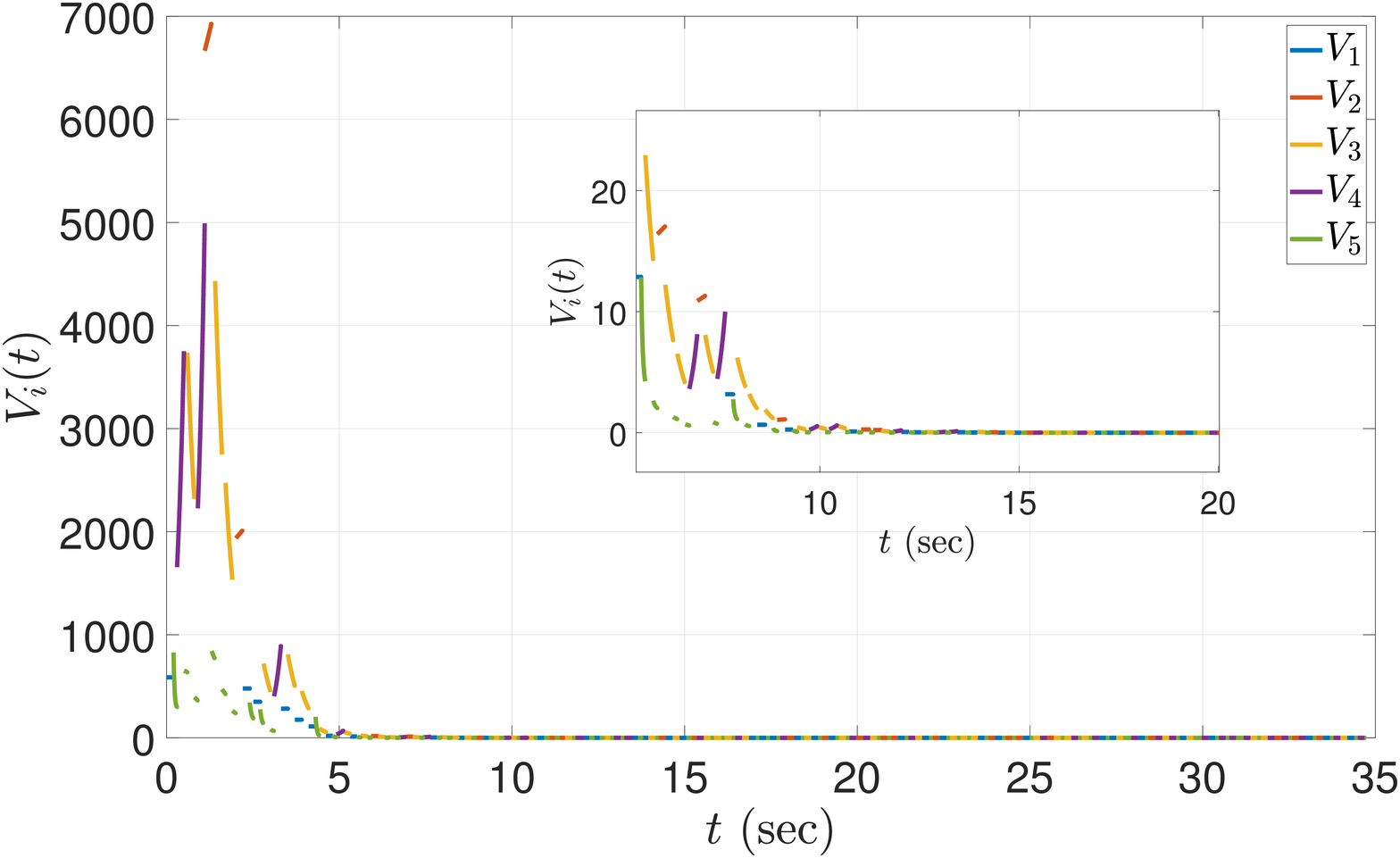}
	\caption{The evolution of the Lyapunov functions $V_i(t)$ for the FTS observer of the linear switched system.}
	\label{fig:sim switch V}
\end{figure}

\begin{figure}[ht!]
	\centering
	\includegraphics[width=1\columnwidth,clip]{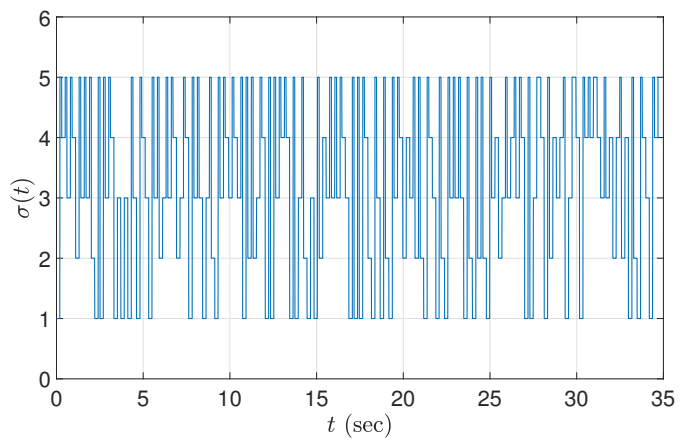}
	\caption{Switching signal for the linear switched system.}
	\label{fig:sim switch ss}
\end{figure}

The provided examples validate that the system can achieve FTS even when one or more modes are unstable, if the FTS mode is active for long enough.




\section{Conclusions and Future Work}\label{Conclusions}
In this paper, we studied FTS of a class of switched and hybrid systems. We showed that under some mild conditions on the bounds on the difference of the values of Lyapunov functions, if the FTS mode is active for a sufficient cumulative time, then the origin of the hybrid system is FTS. Our proposed method allows the individual Lyapunov functions to increase both during the continuous flows as well as at the discrete state jumps, i.e., it allows the hybrid system to have unstable modes. We also presented a method of designing a finite-time stabilizing switching signal. As an application of the theoretical results, we designed an FTS output feedback for a class of linear switched systems in which only one of the modes is both controllable and observable.  


Control inputs that satisfy the multiple-Lyapunov-function conditions are typically obtained via optimization-based techniques, for example linear matrix inequalities (LMIs) for linear switched systems, or sum-of-squares (SOS) for switched systems of polynomial dynamics. In addition, state and time constraints can be further imposed to the underlying optimization problems to capture spatiotemporal specifications. Our ongoing research focuses on incorporating input and state constraints in the hybrid systems framework to model safety (in the sense of invariance of a safe set of states) and temporal requirements (in the sense of convergence to a set or to a point within an arbitrarily chosen time, if possible). More specifically, we are investigating how to impose convergence of the system trajectories in a prescribed time that can be \emph{a priori} selected by the user, rather than merely in finite time (which depends on the initial conditions, and hence can not be in general chosen arbitrarily), so that the overall framework can be used for the synthesis and analysis of controllers for spatiotemporal specifications.


\bibliographystyle{IEEEtran}
\bibliography{myreferences}

\appendices

\section{Proof of Lemma 1}
\begin{proof}\label{app lemma 1 pf}
Lemma 3.3, 3.4 of \cite{zuo2016distributed} establish the following set of inequalities for $z_i\geq 0$ and $0<r\leq 1$
\begin{align}
    \left(\sum_{i = 1}^M z_i \right)^r \leq \sum_{i = 1}^M z_i^r \leq M^{1-r} \left(\sum_{i = 1}^M z_i \right)^r.
\end{align}
Hence, we have that for $a\geq b\geq 0$ and $0<r\leq 1$, $ a^r = (b + (a-b))^r \leq b^r  + (a-b)^r$, or equivalently, 
\begin{equation}\begin{split}
    a^r-b^r \leq (a-b)^r.
\end{split}\end{equation}
Hence, we have that for any $0<r\leq 1$, 
\begin{align*}
    \sum_{i = 1}^k (a_i^r-b_i^r) \leq \sum_{i\in I_1}(a_i^r-b_i^r) & \leq \sum_{i\in I_1}(a_i-b_i)^r.
\end{align*}



\end{proof}

\section{Proof of Corollary \ref{cor LS FTS mode}}\label{app:proof cor LS FTS mode}
\begin{proof}
Since the origin is uniformly stable, we know that there exist $\alpha_4\in \mathcal{GK}$ and a constant $c>0$ such that 
\begin{align}\label{LS xt ineq}
    \|x(t)\|\leq \alpha_4(\|x_0\|), 
\end{align}
for all $t\geq 0$ and all $\|x_0\|<c$ (\cite[Lemma 4.5]{khalil1996noninear}). Now, since the function $V_F$ is positive definite, we know that there exists $\alpha_5\in \mathcal{GK}$ such that (\cite[Lemma 4.3]{khalil1996noninear})
\begin{align*}
    V(x(t))\leq \alpha_5(\|x(t)\|)\overset{\eqref{LS xt ineq}}{\leq}\alpha_5(\alpha_4(\|x_0\|))  = \alpha(\|x_0\|),
\end{align*}
where $\alpha = \alpha_5\circ \alpha_4\in \mathcal{GK}$. Using this, we obtain that 
\begin{align*}
   \bar T_F = \sum\limits_{k = 1}^M |\bar T_{F_k}| & \leq \sum\limits_{k = 1}^M\Big(\frac{\bar V_{F_k}^{1-\beta}}{c((1-\beta)}-\frac{\bar V_{F_k+1}^{1-\beta}}{c(1-\beta)}\Big)\\
   & \leq \sum\limits_{k = 1}^M\frac{\bar V_{F_k}^{1-\beta}}{c((1-\beta)}\leq \sum\limits_{k = 1}^M\frac{\alpha(\|x_0\|)^{1-\beta}}{c((1-\beta)}.
\end{align*}
Define $\bar \gamma = \sum\limits_{k = 1}^M\frac{\alpha^{1-\beta}}{c((1-\beta)}\in \mathcal{GK}$ to be complete the proof. 
\end{proof}


\end{document}